\renewcommand{\@seccntformat}[1]{%
  \ifcsname format@#1\endcsname
    \csname format@#1\endcsname
  \else
    \csname the#1\endcsname\quad %
  \fi
}
\g@addto@macro\appendix{%
  \def\format@section{Appendix \thesection: }%
}
\crefname{lstlisting}{listing}{listings}
\newcommand{\soutthick}[1]{%
    \renewcommand{\ULthickness}{1.2pt}%
       \sout{#1}%
    \renewcommand{\ULthickness}{.4pt}%
}
\definecolor{HighlightColor}{rgb}{0.43, 0.69, 0.26}
\definecolor{HighlightColor2}{rgb}{0.945, 0.349, 0.373}
\definecolor{HighlightColor3}{rgb}{0.349, 0.604, 0.827}
\definecolor{HighlightColor4}{rgb}{0.976, 0.651, 0.353}
\definecolor{HighlightColor5}{rgb}{.62, 0.4, 0.671}
\definecolor{HighlightColor6}{rgb}{.804, 0.439, 0.345}
\definecolor{HighlightColor7}{rgb}{.843, 0.498, 0.702}
\definecolor{HighlightColor8}{rgb}{0.745, 0.769, 0.349}
\definecolor{CodeBG}{rgb}{0.95,0.95,0.95}
\definecolor{NVGreen}{rgb}{0.43, 0.69, 0.26} %
\definecolor{NVRed}{rgb}{0.945, 0.349, 0.373}
\definecolor{NVBlue}{rgb}{0.349, 0.604, 0.827}
\definecolor{NVOrange}{rgb}{0.976, 0.651, 0.353}
\definecolor{NVPurple}{rgb}{.62, 0.4, 0.671}
\definecolor{NVBrown}{rgb}{.804, 0.439, 0.345}
\definecolor{NVPink}{rgb}{.843, 0.498, 0.702}
\definecolor{NVLime}{rgb}{0.745, 0.769, 0.349}
\def\LRA{\Leftrightarrow\mkern40mu}
\def\LA{\Leftarrow\mkern40mu}
\theoremstyle{definition} %
\newtheorem{lemma}{Lemma}
\renewenvironment{proof}{{\bfseries Proof.}}{$\square$}
\begin{document}

\title{\vspace*{-4em}Fast, High Precision Ray/Fiber Intersection \protect\\using Tight, Disjoint Bounding Volumes}
\date{}
\author{Nikolaus Binder, Alexander Keller\\NVIDIA}
\maketitle

\begin{figure}[!ht]
	\centering
	\vspace*{-4em}
	\begin{tabular}[t]{@{}ccc@{}}
	\begin{tikzpicture}
		\begin{axis}[
			height = .25\linewidth,
			width = .36\linewidth,
			xlabel = subdivision depth,
			ylabel = G rays/s,
			y label style={yshift=-3ex},
			ymin  = 0,
			ymax = 14,
			xmin = 1,
			xmax = 34,
			xtick ={2, 4, 9, 16, 22},
			xtick pos=left,
			ytick pos=left,
			scaled y ticks = false,
			legend style={font=\small,draw=none,fill=none}
		]
			\addplot[color=NVRed, mark=*, mark size=1, thick] coordinates {
			(2, 10.500)
			(3, 9.100)
			(4, 7.820)
			(5, 6.300)
			(6, 4.800)
			(7, 3.400)
			(8, 2.200)
			(9, 1.300)
			(10, 0.740)
			(11, 0.414)
			(12, 0.206)
			(13, 0.105)
			(14, 0.053)
			(15, 0.026) %
			(16, 0.013)
			(17, 0.006)
			(18, 0.003)
			(19, 0.001)
			(20, 0.0)
			(21, 0.0)
			(22, 0.0)
			};
			\addplot[color=NVGreen, mark=square*, mark size=1, thick] coordinates {
				(2, 8.880)
				(3, 8.380)
				(4, 7.880)
				(5, 7.420)
				(6, 7.000)
				(7, 6.650)
				(8, 6.300)
				(9, 6.000)
				(10, 5.760)
				(11, 5.520)
				(12, 5.300)
				(13, 5.100)
				(14, 4.920)
				(15, 4.720)
				(16, 4.470)
				(17, 4.200)
				(18, 4.020)
				(19, 3.840)
				(20, 3.740)
				(21, 3.680)
				(22, 3.610)
			};
			\node[anchor=south west] at (axis cs: 23.5, 0) {\includegraphics[width=.05\textwidth, trim = 140px 0 200px 0]{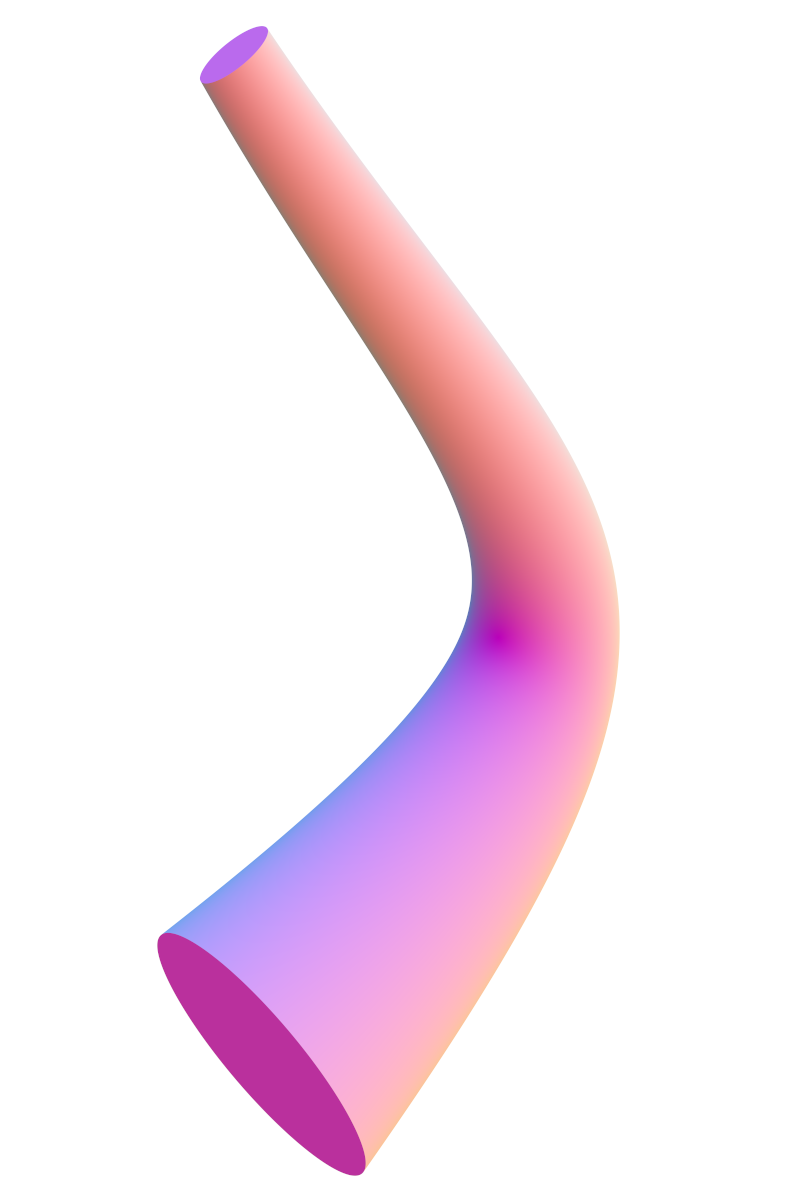}};
			\draw[densely dashed] (axis cs: 4,0) -- (axis cs: 4, 7.880);
		\end{axis}
	\end{tikzpicture}%
	&
	\begin{tikzpicture}
		\begin{axis}[
			height = .25\linewidth,
			width = .37\linewidth,
			xlabel = subdivision depth,
			ymin  = 0,
			ymax = 14,
			xmin = 1,
			xmax = 32,
			xtick ={2, 9, 16, 22},
			xtick pos=left,
			ytick pos=left,
			scaled y ticks = false,
			legend style={font=\small,draw=none,fill=none}
		]
			\addplot[color=NVRed, mark=*, mark size=1, thick] coordinates {
			(2, 13.100)
			(3, 12.600)
			(4, 12.300)
			(5, 11.800)
			(6, 11.100)
			(7, 10.300)
			(8, 9.300)
			(9, 8.160)
			(10, 6.560)
			(11, 4.330)
			(12, 2.600)
			(13, 1.440)
			(14, 0.765)
			(15, 0.395)
			(16, 0.200)
			(17, 0.102)
			(18, 0.051) %
			(19, 0.025)
			(20, 0.012)
			(21, 0.006)
			(22, 0.003)
			};
			\addplot[color=NVGreen, mark=square*, mark size=1, thick] coordinates {
				(2, 8.400)
				(3, 8.270)
				(4, 8.120)
				(5, 8.000)
				(6, 7.920)
				(7, 7.860)
				(8, 7.810)
				(9, 7.750)
				(10, 7.680)
				(11, 7.620)
				(12, 7.570)
				(13, 7.510)
				(14, 7.460)
				(15, 7.400)
				(16, 7.360)
				(17, 7.300)
				(18, 7.270)
				(19, 7.210)
				(20, 7.150)
				(21, 7.100)
				(22, 7.060)
			};
			\node[anchor=south west] at (axis cs: 24, 0) {\includegraphics[width=.025\textwidth, trim = 210px 0px 250px 0px]{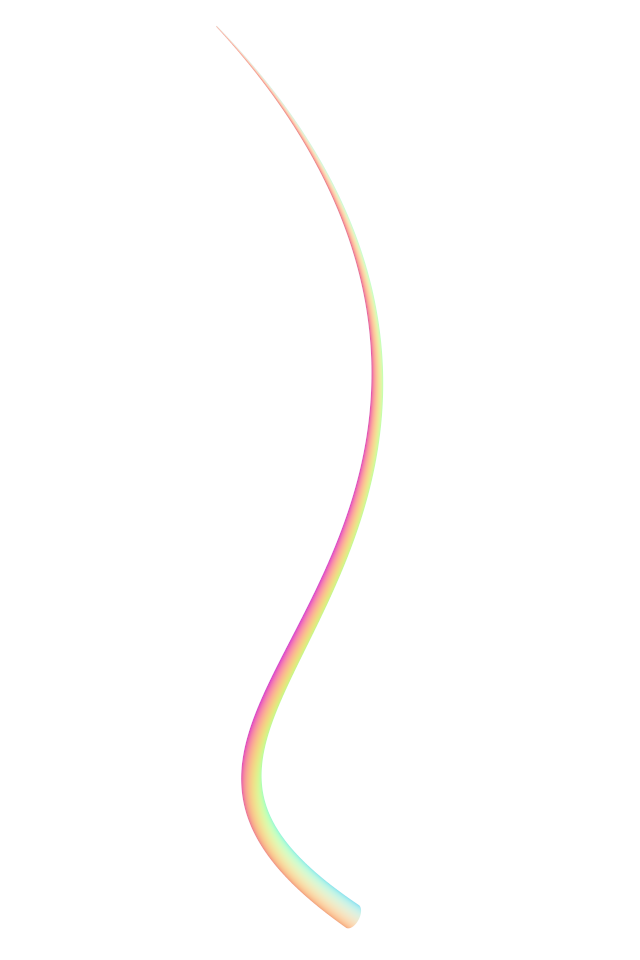}};
			\draw[densely dashed] (axis cs: 9,0) -- (axis cs: 9, 8.160);
		\end{axis}
	\end{tikzpicture}%
	&
	\begin{tikzpicture}
		\begin{axis}[
			height = .25\linewidth,
			width = .38\linewidth,
			xlabel = subdivision depth,
			ymin  = 0,
			ymax = 14,
			xmin = 1,
			xmax = 36,
			xtick ={2, 6, 9, 16, 22},
			xtick pos=left,
			ytick pos=left,
			scaled y ticks = false,
			legend style={font=\tiny,draw=none,fill=none,anchor=west,at={(0,0.82)}},
			legend cell align = {left}
		]
			\addplot[color=NVRed, mark=*, mark size=1, thick] coordinates {
			(2, 7.840)
			(3, 7.030)
			(4, 6.450)
			(5, 5.900)
			(6, 5.280)
			(7, 4.500)
			(8, 3.580)
			(9, 2.600)
			(10, 1.700)
			(11, 1.020)
			(12, 0.580)
			(13, 0.310)
			(14, 0.160)
			(15, 0.082) %
			(16, 0.040)
			(17, 0.020)
			(18, 0.010)
			(19, 0.005)
			(20, 0.002)
			(21, 0.001)
			(22, 0.0005)
			};
			\addlegendentry{Nakamaru/Ohno [12]}
			\addplot[color=NVGreen, mark=square*, mark size=1, thick] coordinates {
				(2, 6.860)
				(3, 5.970)
				(4, 5.630)
				(5, 5.400)
				(6, 5.220)
				(7, 5.060)
				(8, 4.930)
				(9, 4.790)
				(10, 4.660)
				(11, 4.530)
				(12, 4.420)
				(13, 4.300)
				(14, 4.180)
				(15, 4.060)
				(16, 3.950)
				(17, 3.830)
				(18, 3.760)
				(19, 3.660)
				(20, 3.590)
				(21, 3.540)
				(22, 3.470)
			};
			\addlegendentry{Our method}
			\node[anchor=south west] at (axis cs: 25, 0) {\includegraphics[width=.06\textwidth, trim = 100px 0 160px 0]{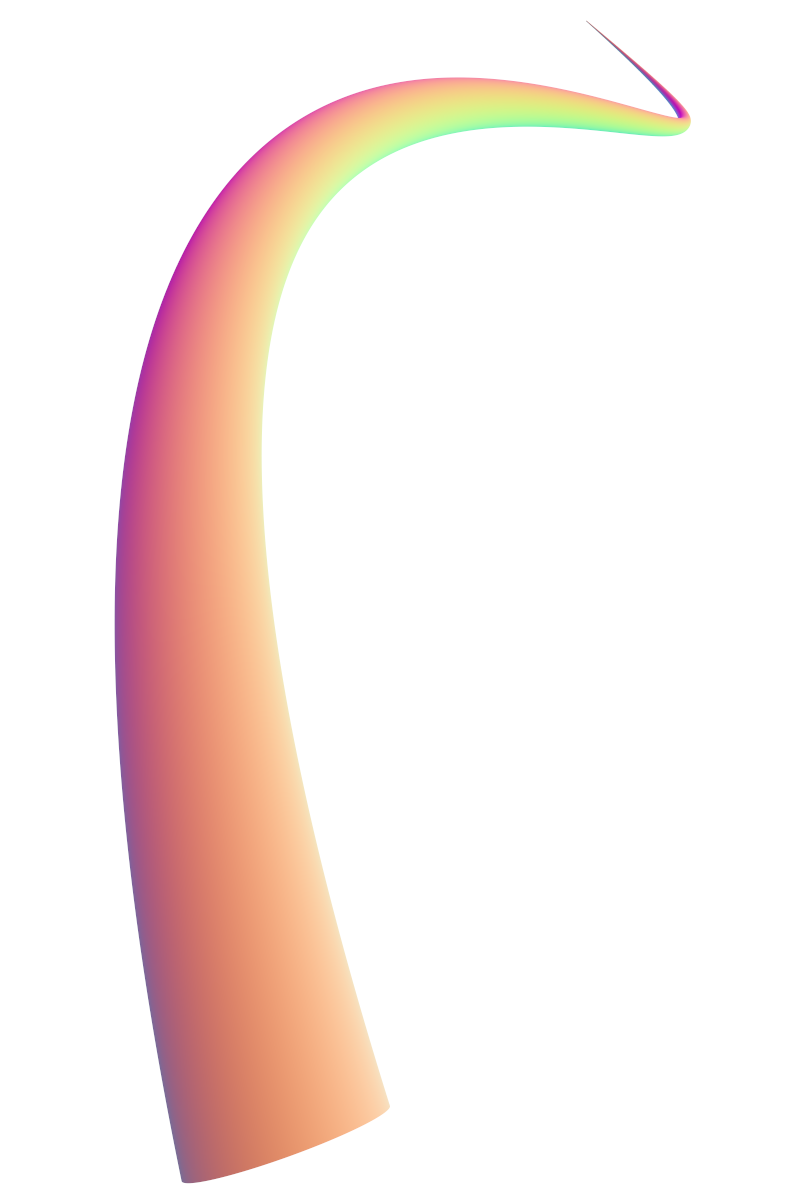}};
			\draw[densely dashed] (axis cs: 6,0) -- (axis cs: 6, 5.280);
		\end{axis}
	\end{tikzpicture}%
	\end{tabular}
	\vspace*{-4.5ex}
	\caption{The performance of methods pruning with (overlapping) AABBs drops dramatically with increasing
	subdivision depth, whereas our tight, disjoint bounding volumes always allow for efficient pruning
	and instant termination.}
	\label{fig:performance}
\end{figure}

\begin{abstract}
Analyzing and identifying the shortcomings of current subdivision methods for finding intersections of rays
with fibers defined by the surface of a circular contour swept along a B\'ezier curve,
we present a new algorithm that improves precision and performance. Instead of the
inefficient pruning using overlapping axis aligned bounding boxes and determining the closest point of approach of the ray and the
curve, we prune using disjoint bounding volumes defined by cylinders and calculate the intersections on the limit
surface. %
This in turn allows for computing accurate parametric position and normal in the point of intersection.
The iteration requires only one bit per subdivision to avoid costly stack memory operations.
At a low number of subdivisions, the performance of the high precision algorithm is competitive, while for a high number of
subdivisions it dramatically outperforms the state-of-the-art.
Besides an extensive mathematical analysis, source code is provided.
\end{abstract}

\section{Introduction}

State-of-the-art photo realistic image synthesis is based on (quasi-) Monte Carlo simulation of light propagation: Rays
are traced to connect the camera with the light sources. Then, the contribution of all these light
paths is summed up. %

Often, fibers for hair and fur are part of the scenery. These fibers are usually modeled as sweep surfaces along
B\'ezier curves with a circular cross section and a parametric radius, which may vary along the curve. While triangles
are a very common representation for most other parts of the scene geometry, they often are a very unsuitable
approximation for fibers. Reasons for this exception include memory restrictions, numerical issues, and efficiency
considerations. Therefore, especially in high quality rendering, custom primitives are used for fibers, and
intersections of rays with these primitives must be found.

\section{Previous Work}

Most popular approaches are based on recursive subdivision of the curve \cite{Catmull:1974}, after which either the
segment of the curve can be approximated by a simple primitive or an iterative solver refines the solution. The number
of subdivisions required for a certain reduction of curvature can be reduced by a more thorough analysis, however at the
price of a significantly increased effort \cite{Hain:2005}.

Generalized cylinders \cite{Ballard:1982} are a common representation for hair fibers. They are defined by sweeping an
arbitrary two-dimensional contour along a three-dimensional curve. Intersections of rays with these objects can be found
without tessellation \cite{Bronsvoort:1985}: Each ray is projected into a parametric frame aligned to the trajectory
of the curve, i.e. the contour is fixed. At the same time the trajectory of the ray becomes a two-dimensional curve.
Then, the ray and the contour are subdivided simultaneously until the size of their bounding boxes fall below a
threshold. During the process, combinations of the two intervals for which the bounding boxes do not overlap can be
pruned. In a final step, the exact intersection points are calculated, which requires solving equations of higher
polynomial degree.

The method can be simplified by either restricting the shape of the sweep curve \cite{vanWijk:1984} or the shape
of the contour without subdividing the curve first \cite{vanWijk:1985}. However, finding roots of polynomials with
a high degree is still required and remains numerically challenging.

Intersections of rays and sweep surfaces with a circular cross section can also be found by combining the equations of
the trajectory of a ray and the parametric distance of a point to a parametric position on the curve
\cite{Leipelt:1995}. Again, roots of a polynomial with high degree must be found.

Approximating the intersection on the surface of the fiber with the closest point of approach of the ray and the curve
lowers the polynomial degree. The closest point of approach of two lines can be determined very efficiently in
ray-centric coordinate systems using an adaptive linearization method based on recursive subdivision
\cite{Nakamaru:2002}. If only primary visibility from a pin hole camera is of concern, it can be beneficial to compute
line samples instead of point samples \cite{Barringer:2012}. In the same spirit, cone tracing can decrease the number of
samples significantly \cite{Qin:2014}. The obtained coverage information, however, may not fit the architecture of a
fully path traced simulation.

Improvements of a ``top level'' hierarchy referencing fibers and unrolling curve subdivision such that the number of
segments matches the SIMD width may improve performance on certain architectures \cite{Woop:2014}. More recent iterative
root finding methods can replace recursive subdivision to improve convergence speed and precision at the same time
\cite{Reshetov:2017}.

While these methods computing the closest point of approach deliver state-of-the-art performance for a certain level of
detail, they all suffer from the underlying approximation, which prohibits the determination of the correct intersection
on the surface of the fiber and the normal in the intersection. An example for this issue is shown in
\Cref{fig:ribbons-vs-ours}. Furthermore, the inefficiency of the pruning tests of subdivision-based methods becomes
prohibitive for a high number of subdivisions. Finally, recursive methods using a stack suffer from memory bandwidth
limitations, especially on current GPUs.

Fast ray tracing is possible due to efficient data structures that identify all potential parts of the scene that may be
intersected by a ray. The state-of-the-art for these acceleration data structures performs hierarchical partitioning of
either space or the set of objects. Furthermore, there exist hybrid schemes that partition both the set of objects and
space in order to improve performance \cite{Stich:2009}. As the construction and traversal of such acceleration data
structures is almost orthogonal to the actual ray/fiber intersection, we focus on improving the latter in this article.

\begin{figure}
	\centering
	\begin{tabular}{@{}ccc@{}}
		\includegraphics[width=.47\textwidth]{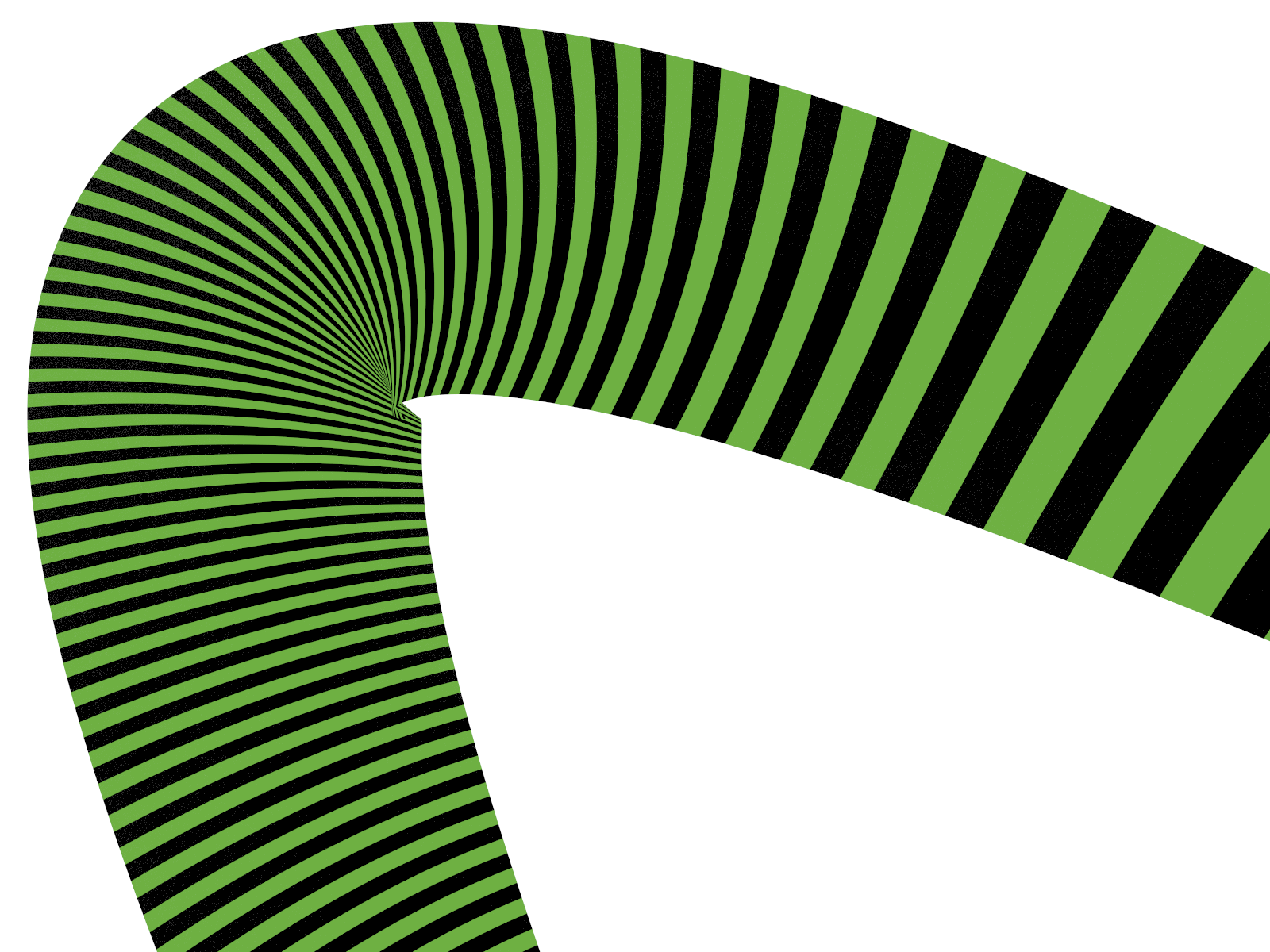}&%
		\hspace*{.02\textwidth}
		\includegraphics[width=.47\textwidth]{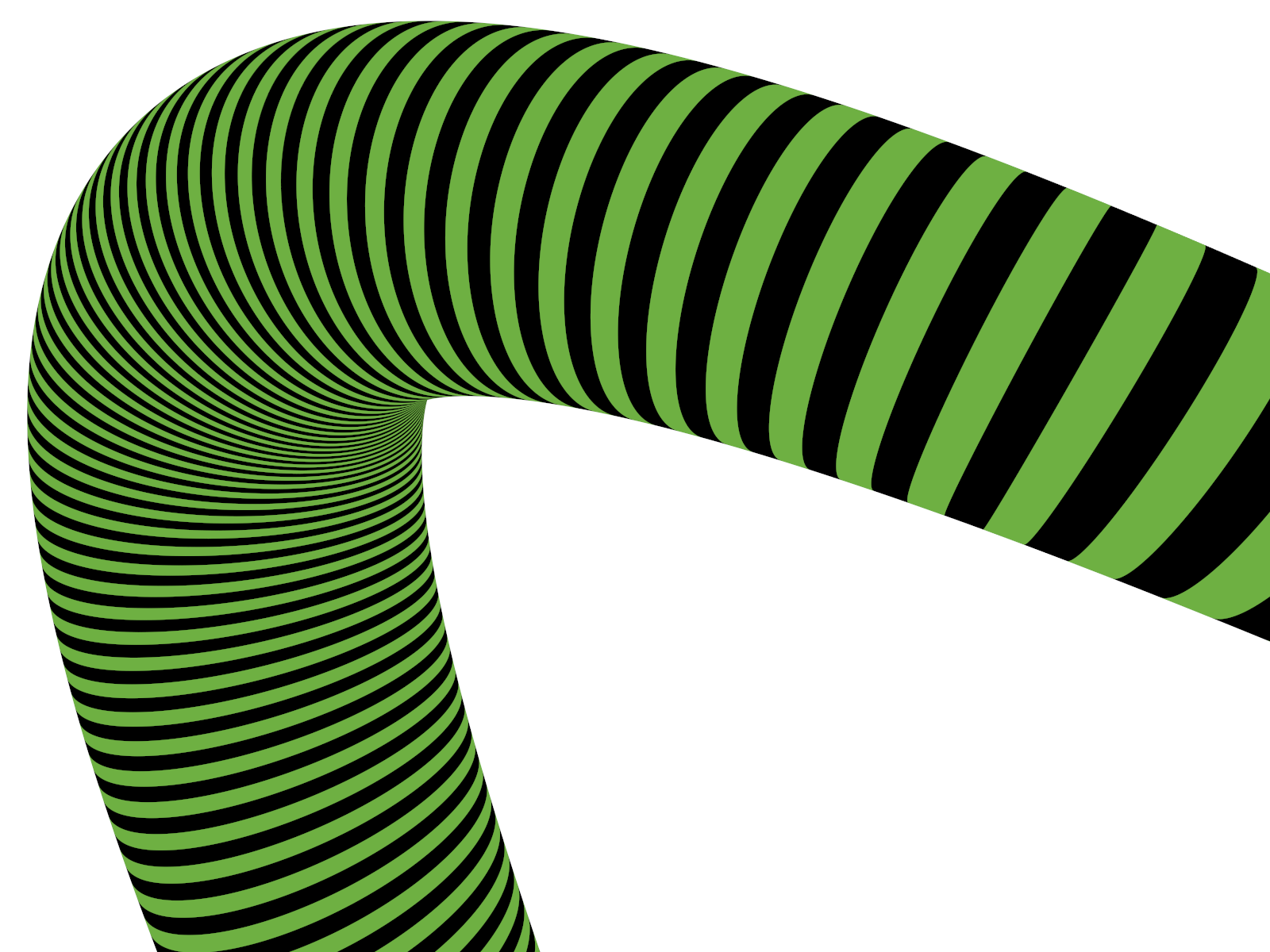}%
	\end{tabular}
	\vspace*{-2ex}
	\caption{Left: Methods based on the closest point of approach cannot determine the correct parametric position and
	normal. Right: Our method computes both with high precision.}
	\label{fig:ribbons-vs-ours}
\end{figure}

\section{Algorithm}

Our algorithm is a member of the family of subdivision-based methods computing intersections of rays with fibers by
recursively bisecting the curve and pruning regions that cannot be intersected by the ray \cite{Catmull:1974}.

The first contribution is a stackless iterative variant that only keeps track of subdivision levels that require
backtracking and re-computes all necessary data instead of employing a stack.
Our second contribution is a fast pruning test with oriented cylinders that significantly improves the accuracy,
especially for a high number of subdivisions. After a termination criterion is met, e.g. a fixed number of subdivisions,
the final intersection with the linearized segment, represented by a cylinder with oriented end caps, is computed. As
bounding cylinders of neighboring curve segments now are disjoint by construction and closer segments are always
intersected first, our algorithm can immediately terminate after an intersection has been found (3rd contribution).
Instead of approximating the actual intersection on the surface of the fiber with the closest point of approach, we
reuse the intersection already determined for pruning (4rd contribution) and are able to compute an accurate normal (5th
contribution).

\subsection{Numerically Robust Curve Representation}
\label{sec:representation}

A na\"ive implementation for cubic B\'ezier curves using four control points $(p_0, p_1, p_2, p_3)$ suffers from
severe floating point precision issues in our algorithm due to cancellation in differences required to determine the
cylinder axis ($p_3 - p_0$) and the tangent in the split point.

Therefore, we use a representation tailored to our pruning test as illustrated in \Cref{fig:representation}: We maintain the first
control point $p := p_0$, the tangents in the start and end points $t_0 := p_1 - p_0, t_1 := p_3 - p_2$, and the
direction $d := p_3 - p_0$. This representation requires different rules for the subdivision of $(p, d, t_0, t_1)$ into
$\left(p^L, d^L, t_0^L, t_1^L\right)$ and $\left(p^R, d^R, t_0^R, t_1^R\right)$, where %
\begin{align*}
	\Delta p &= \tfrac{3}{8} t_0 + \tfrac{1}{2} d - \tfrac{3}{8} t_1,\\
	t_c &= -\tfrac{1}{8} t_0 + \tfrac{1}{4} d - \tfrac{1}{8} t_1,
\end{align*}
and
\begin{align*}
	p^L   &= p,              & p^R   &= p + \Delta p,\\
	d^L   &= \Delta p,              & d^R   &= d - \Delta p,\\
	t_0^L &= \frac{1}{2}t_0, & t_0^R &= t_c,\\
	t_1^L &= t_c,            & t_1^R &= \frac{1}{2}t_1.
\end{align*}

As we will use disjoint bounding volumes,
only one of the two sets needs to be calculated as determined by the pruning test.
In fact this subdivision can be computed even slightly more efficiently than the subdivision of $(p_0, p_1, p_2,
p_3)$ and only exposes a minimal amount of instruction divergence due to branching.

\begin{figure}
\centering
\begin{minipage}{.45\textwidth}
  \centering
	\begin{tikzpicture}[scale=0.7]
		\coordinate (A) at (0, 0);
		\coordinate (B) at (1, 1.5);
		\coordinate (C) at (2, 1);
		\coordinate (D) at (4, 0);
		\node at (A) {\small \textbullet};
		\node at (B) {\small \textbullet};
		\node at (C) {\small \textbullet};
		\node at (D) {\small \textbullet};
		\draw[thick, -latex] (A) -- (B) node[midway, above, font=\small, xshift=-2] {$t_0$};
		\draw[thick, -latex] (A) -- (D) node[midway, above, font=\small] {$d$};
		\draw[thick, -latex] (C) -- (D) node[midway, above, font=\small] {$t_1$};
		\node[anchor=north, font=\small] at (A) {$p := p_0$};
		\node[anchor=south, font=\small] at (B) {$p_1$};
		\node[anchor=south, font=\small] at (C) {$p_2$};
		\node[anchor=north, font=\small] at (D) {$p_3$};
		\draw[thick, densely dotted] (A) .. controls (B) and (C)  .. (D);
	\end{tikzpicture}
  \captionof{figure}{Representing the curve with the tuple $\mathbf{\{p, d, t_0, t_1\}}$ improves the numerical robustness of the method significantly.}
	\label{fig:representation}
\end{minipage}%
\hspace*{.09\textwidth}
\begin{minipage}{.45\textwidth}
  \centering
	\begin{tikzpicture}
		\coordinate (A) at (0, 0);
		\coordinate (B) at (5, 1);
		\coordinate (C) at (-1, 1);
		\coordinate (D) at (4, 0);
		\coordinate (T) at ($-1.0*(A)-(B)+(C)+(D)$);
		\newdimen\xta
		\newdimen\yta
		\pgfextractx{\xta}{\pgfpointanchor{T}{center}}
		\pgfextracty{\yta}{\pgfpointanchor{T}{center}}
		\coordinate (N) at (-\yta, \xta);
		\coordinate (S) at ($1.0/8.0*(A)+3.0/8.0*(B)+3.0/8.0*(C)+1.0/8.0*(D)$);
		\draw[thick] (A) .. controls (B) and (C)  .. (D);
		\draw[thick, densely dotted, -latex] (0,0) -- (B);
		\draw[thick, densely dotted, latex-] (C) -- (D);
		\node at (0, 0) {\small \textbullet};
		\node[below] at (0, 0) {\small $p_0$};
		\node[below] at (B) {\small $p_1$};
		\node[below] at (C) {\small $p_2$};
		\node at (D) {\small \textbullet};
		\node[below] at (D) {\small $p_3$};
		\draw[thick, densely dashed] (S) -- ($(S)+0.35*(N)$);
		\draw[thick, densely dashed] (S) -- ($(S)-0.35*(N)$);
	\end{tikzpicture}
  \captionof{figure}{A loop violating the constraints required to construct disjoint bounding volumes must be split beforehand.}
	\label{fig:loop}
\end{minipage}
\end{figure}

\subsection{Efficient Hierarchical Pruning} %

Each region of the subdivision is conservatively bounded by an oriented cylinder, which is partitioned by a plane located in the split
point and perpendicular to the tangent in the split point of the curves. The limit surface of these cylinders guarantees that an
intersection is always mapped to the closest point on the curve.

Only if the intersection of the ray with the plane is inside the cylinder, both sub-regions must be considered. Then,
subdivision starts with the region whose bounding volume is intersected first along the ray. \Cref{fig:bounding_cylinder}
shows an example with four possible cases. Note that the test for inclusion only requires comparing the distances of the
two ray/cylinder intersections with the distance of the intersection with the plane.

As these pruning tests are performed with disjoint bounding volumes and refinement always continues with the closest
sub-region, subdivision can immediately terminate after an intersection has been found. Instant termination is essential
for a high number of subdivisions because otherwise the number of unpruned regions may grow exponentially with
subdivision depth.

While bounding both subcurves in individual cylinders instead of using one partitioned cylinder improves the culling
accuracy, the overhead of computing and intersecting two bounding volumes outweighs the theoretical benefit in practice,
especially since the benefit quickly decreases with subdivision.

\subsection{Implementation}

Pruning is performed in a ray centric coordinate system, in which the ray starts in the origin and goes along the
positive $z$ axis (``unit ray''). A reliable orthonormal basis can be efficiently constructed using Duff et al.'s recent
improvement of Frisvad's method \cite{frisvad:onb:2012, duff:onb:2017}. The transformation into the local frame only
needs to be performed once at the beginning by calculating a local set of control points. In this coordinate system, we
can simplify ray/plane intersection and the infinite cylinder intersection described by Cychosz et
al.~\cite{Cychosz:1994} significantly since $\forall v \in \mathbb{R}^3$: $\langle {v, d} \rangle = v_z$ and $v - o = v$
for a unit ray defined by its origin and direction $(o, d)$. \Cref{lst:ray-plane,lst:ray-cylinder} present
the resulting optimized intersection functions; the simplified ray/cylinder intersection is derived in
\Cref{sec:appendix_cylinder}.

We use four-dimensional control points, where the first three components are the position, and the last one defines the
radius. This consistent representation allows for cubic interpolation of the radius.

Bounding cylinders are oriented along the vector connecting the first and last control point and have a conservative
radius defined by the sum of the maximum radius in the region and the maximum distance of the inner control points to
the cylinder axis. We also use B\'ezier curves for radius interpolation, and bound the parametric radius using the
convex hull property.

An example implementation for the computation of a conservative radius for the bounding cylinders and cubic B\'ezier
curves is given in \Cref{lst:calc_radius}, using the distance of the two inner control points to the axis determined by the
method shown in \Cref{lst:distance-point-line}.

The infinite cylinders are cropped by restricting the $t$-parameter interval of the ray. After determining initial
bounds of the $t$ parameter interval, in each subdivision step one of the interval bounds is updated; both are
recalculated after backtracking. A simple implementation for cubic B\'ezier curves is shown in
\Cref{lst:recalculation}.

Recursive subdivision is performed by an iterative process by maintaining a bit string in which each subdivision level
is represented by one bit and the current size of the parametric domain. After the pruning test, the corresponding bit
in the bit string is set to one if and only if both subregions must be considered, and only in this case backtracking is required.
Then, the control points and the $t$ interval are recalculated to avoid maintaining a stack of control points. The
current parametric interval of the curve can directly be derived from the bit stack. It is always maintained in two
integer variables for start and size of the interval, which must be converted to floating point values in the unit
interval before calculating new control points. This conversion is shown in \Cref{lst:calculate_interval}.

Upon termination, the intersection of the ray with the bounding cylinder used for pruning already determines the
intersection with the linearized segment. Only very little effort is required to compute the normal and the parametric
value in the intersection, and this calculation is performed by all threads of a warp simultaneously in the very end.
\Cref{lst:calc_intersection} shows a possible implementation.

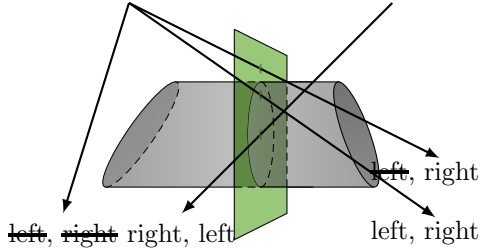
\begin{figure}
	\centering
	\begin{tikzpicture}[scale=0.35]
		\begin{scope}[rotate=-90]
		\draw[white] (-5,-2) rectangle (4,8);
			\fill[NVGreen!70!black!70] (-2, 3) arc (180:0:2cm and 0.5cm) -- (2, 4) -- (-2, 4) -- cycle;
			\fill[NVGreen!70!black!70] (0, 3) circle (2cm and 0.5cm);

			\fill[left color=gray!50!black,right color=gray!50!black,middle color=gray!50,shading=axis,opacity=0.25] (2,-2.95) -- (2,7.3) -- (-2,6) -- (-2,-0.175) -- cycle;
			\begin{scope}
				\clip[shift={(2.05, 1)}, rotate=20] (0,6) circle (2.12cm and 0.5cm);
				\fill[white] (2,0) -- (2,7.3) -- (-2,6) -- (-2,0) arc (180:360:2cm and 0.5cm);
			\end{scope}
			\begin{scope}
				\clip[shift={(-0.00, -1.55)}, rotate=-35] (0,0) circle(2.42cm and 0.5cm);
				\fill[white] (2,0) -- (2,-3.2) -- (-2,-3.2) -- (-2,0) -- cycle;
			\end{scope}
			\draw[white] (-2, 6) -- (2,7.3); %

			\fill[shift={(0.0, -1.55)}, rotate=-35, top color=black!60!,bottom color=black!40,middle color=black!50,shading=axis,opacity=0.25] (0,0) circle(2.417cm and 0.5cm);
			\begin{scope}
				\clip (-2.05, -0.2) -- (-2.05, -3.2) -- (2.05, -3.2) -- (2.05, -2.9) -- cycle;
				\draw[shift={(0.00, -1.55)}, rotate=-35] (0,0) circle(2.417cm and 0.5cm);
			\end{scope}
			\begin{scope}
				\clip (-2.05, -0.2) -- (-2.05, 1) -- (2.05, 1) -- (2.05, -2.9) -- cycle;
				\draw[densely dashed, shift={(0.00, -1.55)}, rotate=-35] (0,0) circle(2.417cm and 0.5cm);
			\end{scope}

			\begin{scope}
				\clip[shift={(0, -1.55)}, rotate=-35] (0,0) circle(2.42cm and 0.5cm);
				\fill[draw=gray!50,left color=gray!50!black,right color=gray!50!black,middle color=gray!50,shading=axis,opacity=0.15] (2,-4.9) -- (2,11.3) -- (-2,6) -- (-2,-4.2) -- cycle;
			\end{scope}

			\fill[shift={(2.05, 1)}, rotate=20,top color=black!60!,bottom color=black!40,middle color=black!50,shading=axis,opacity=0.5] (0,6) circle (2.12cm and 0.5cm);
			\draw[shift={(2.05, 1)}, rotate=20] (0,6) circle (2.12cm and 0.5cm);

			\draw (-2,6) -- (-2,-0.25);
			\draw (2,7.3) -- (2,-2.9);

			\fill[NVGreen!70] (-4, 2) -- (-3.5, 3) -- (-2, 3) arc (180:360:2cm and 0.5cm) -- (3.5, 3) -- (4, 2) -- cycle;
			\fill[NVGreen!70] (-4, 2) -- (-3, 4) -- (-2, 4) -- (-2, 2) -- cycle;
			\fill[NVGreen!70] (4, 2) -- (3, 4) -- (2, 4) -- (2, 2) -- cycle;

			\draw[densely dashed] (-2,3) arc (180:0:2cm and 0.5cm);
			\draw (-2,3) arc (180:360:2cm and 0.5cm);
			\draw (-4, 2) -- (4, 2) -- (3, 4) -- (2, 4);
			\draw[densely dashed] (2, 4) -- (-2, 4);
			\draw (-2, 4) -- (-3, 4) -- (-4, 2);

			\fill[NVGreen!70!black!90] (-2, 2) -- (-2, 3) arc (180:360:2cm and 0.5cm) -- (2, 2) -- cycle;
			\draw[densely dashed] (2, 2) -- (2, 3);
			\draw[densely dashed] (-2, 2) -- (-2, 3);

			\draw (2, 2) -- (-2, 2);
			\draw (-2, 3) -- (-2, 5);
			\draw (2, 3) -- (2, 5);

			\draw[thick](-5, -2) -- (-2.5, 3);
			\fill[NVGreen!70!black!90] (-2.5,3) circle(0.2cm and 0.05cm);
			\draw[thick, -latex, shorten >= -20] (-2.5, 3) -- (0, 8) node[pos=1.25, below, font=\small] {\soutthick{left}, right};

			\draw[thick](-5, -2) -- (-1.5, 3);
			\fill[NVGreen!50!black!90] (-1.5,3) circle(0.2cm and 0.05cm);
			\draw[thick, -latex, shorten >= -20] (-1.5, 3) -- (2, 8) node[pos=1.25, below, font=\small] {left, right};

			\draw[thick, -latex](-5, -2) -- (3, -4.5) node[pos=1.0, below, font=\small] {\soutthick{left}, \soutthick{right}};

			\draw[thick, -latex](0, 3) -- (3, 0) node[pos=1.0, below, font=\small] {right, left};
			\fill[NVGreen!50!black!90] (0,3) circle(0.2cm and 0.05cm);
			\draw[thick](-5, 8) -- (0, 3);
		\end{scope}
	\end{tikzpicture}
	\vspace*{-2ex}
	\caption{Bounding cylinders with partitioning planes improve the efficiency of pruning tests and allow for instant
termination after an intersection has been found.}
	\label{fig:bounding_cylinder}
	\vspace*{-0.5ex}
\end{figure}

\subsection{Constraints and Limitations}

A bisection into subregions that can be bounded by disjoint bounding volumes poses well-defined restrictions on the
allowed sets of control points. At the same time,
the constraints also ensure that curves with valid configurations cannot be split into invalid ones. For cubic B\'ezier
curves the constraints
\begin{align*}
	\langle {p_2 - p_0, p_1 - p_0} \rangle &\geq 0,\\
	\langle {p_3 - p_1, p_1 - p_0} \rangle &\geq 0,\\
	\langle {p_3 - p_1, p_3 - p_2} \rangle &\geq 0,\\
	\langle {p_2 - p_0, p_3 - p_2} \rangle &\geq 0,\\
	\langle {p_2 - p_0, p_3 - p_1} \rangle &\geq 0
\end{align*}
guarantee that the curve can be recursively split into subcurves with disjoint bounding volumes.
\Cref{sec:appendix_constraints} provides the constraints required for quadratic B\'ezier curves and all necessary
proofs for both quadratic and cubic B\'ezier curves. Curves that do not fulfill these constraints must be subdivided beforehand. An example for such a configuration is shown in
\Cref{fig:loop}.

While these constraints are necessary, they are not sufficient to guarantee disjoint bounding volumes of fibers: If a
point on the surface (perpendicular to the tangent $t_u$ of the point on the curve $p_u$, at a distance defined by the
radius in that point $r_u$) intersects the partitioning plane in the split point, a valid part of the surface of the
fiber will be cropped. %

\begin{figure}
	\centering
	\includegraphics[width=.5\textwidth]{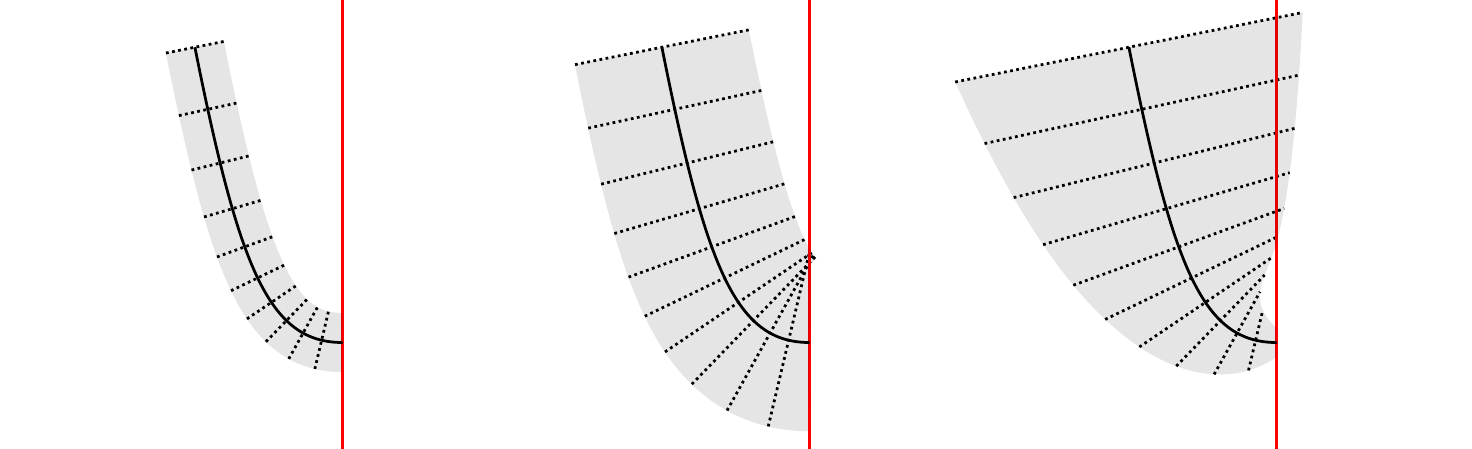}
	\caption{While curves may be split disjointly in cusps, thicker fibers can still overlap splitting planes, causing
	visible defects. While the left fiber does not suffer from this issue, the fiber in the middle, which goes along the
	same curve but has a larger radius does. Of course, fibers with parametric radius (right) can also overlap
	partitioning planes.}
	\label{fig:issue-cusps}
\end{figure}

\Cref{fig:issue-cusps} shows three examples: The leftmost fiber with a small, constant radius does not suffer from this
issue, while the surface of the other ones intersects the split plane as their radius is too large, and thus errors may
be introduced. Besides thick fibers, high curvatures (e.g. in cusps) can cause similar issues. Regions with such a
behavior must be isolated and require subdivision beforehand, too.
Note that a similar issue affects methods based on the closest point of approach.

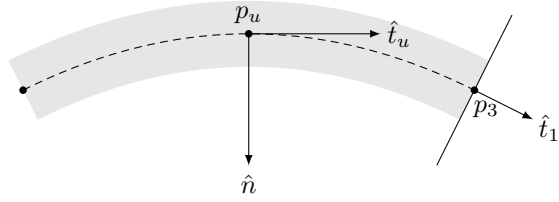
\begin{figure}%
	\centering
	\begin{tikzpicture}
		\coordinate (A) at (0, 0);
		\coordinate (B) at (2, 1);
		\coordinate (C) at (4, 1);
		\coordinate (D) at (6, 0);
		\draw[line width=25, opacity=0.1] (A) .. controls (B) and (C) .. (D);
		\draw[densely dashed] (A) .. controls (B) and (C) .. (D);
		\draw (5.5, -1) -- (6.5, 1);
		\draw[-latex, shorten >= 7] (D) -- (7, -0.5) node[font=\small] {$\hat t_1$};
		\draw[-latex, shorten >= 7] (3, 0.75) -- (5, 0.75) node[font=\small] {$\hat t_u$};
		\draw[-latex, shorten >= 7] (3, 0.75) -- (3, -1.25) node[font=\small] {$\hat n$};
		\node[anchor = south, font=\small] at (3, 0.75) {$p_u$};
		\fill (3, 0.75) circle (0.05);
		\fill (A) circle (0.05);
		\fill (D) circle (0.05);
		\node[anchor=north west, font=\small, xshift=-4] at (D) {$p_3$};
	\end{tikzpicture}
	\caption{Illustration of finding the closest point on the normal plane in a point on the curve to a partitioning
plane. \label{Fig:EvilConfiguration}} %
\end{figure}

Such configurations can be identified as shown in \Cref{Fig:EvilConfiguration}:
The surface point closest to the plane in the split point perpendicular to the split tangent $t_1$ is on the plane
perpendicular to the tangent $t$ at the parametric position. Gram-Schmidt orthogonalization yields the displacement
from the curve closest to the split plane
\begin{align*}
n_u &= t_1 - \langle {t_1, \hat t_u} \rangle \cdot \hat t_u\\
    &= t_1 - \frac{\langle {t_1, t_u \rangle}t_u}{\langle {t_u, t_u} \rangle}\\
    &\propto \langle {t_u, t_u} \rangle t_1 - \langle {t_1, t_u \rangle}t_u{},
\end{align*}
where $t_u = (1-u)^2(p_1 - p_0) + 2(1-u)u(p_2 - p_1) + u^2(p_3 - p_2)$ and $t_1 = p_3 - p_2$ for a cubic B\'ezier curve.
As
$n_u$ has a maximum degree of 4, checking all solutions $u_i \in [0, 1]$ of
\begin{align*}
\langle {p_3 - p_u + r_u \cdot \hat n_u, \hat t_1} \rangle &= 0&\textrm{or just}\\
\langle {p_3 - p_u + \bar r \cdot \hat n_u, \hat t_1} \rangle &= 0,&r_u \leq \bar r
\end{align*}
requires solving a quartic equation, e.g. using Ferrari's method \cite{Smith:1929}, eliminating the cubic term using
the Tschirnhaus transformation \cite{Boyer:1968}.

As the curvature decreases with subdivision (%
see subdivision rules in \Cref{sec:representation}%
) and
the maximum radius $\bar r$ is a constant, it is sufficient to check each fiber only initially in its two end points
with corresponding tangents. %

While it would be trivially possible to support all possible curve configurations and overcome the issues in cusps and
with very thick fibers by optionally allowing overlapping bounding volumes in certain subdivisions, the overhead of
checking the criteria in every subdivision step and recording the ones that overlap will most likely not pay off in
practice since the number of such configurations is usually small. As furthermore the cost of
splitting beforehand is rather moderate, the overall penalty tends to be negligible.

\section{Results and Discussion}

We evaluated the performance and precision of the presented algorithm using single fibers along quadratic and cubic
B\'ezier curves.

\begin{figure}
	\centering
	\begin{tabular}{@{}ccc@{}}
		\includegraphics[width=.21\textwidth]{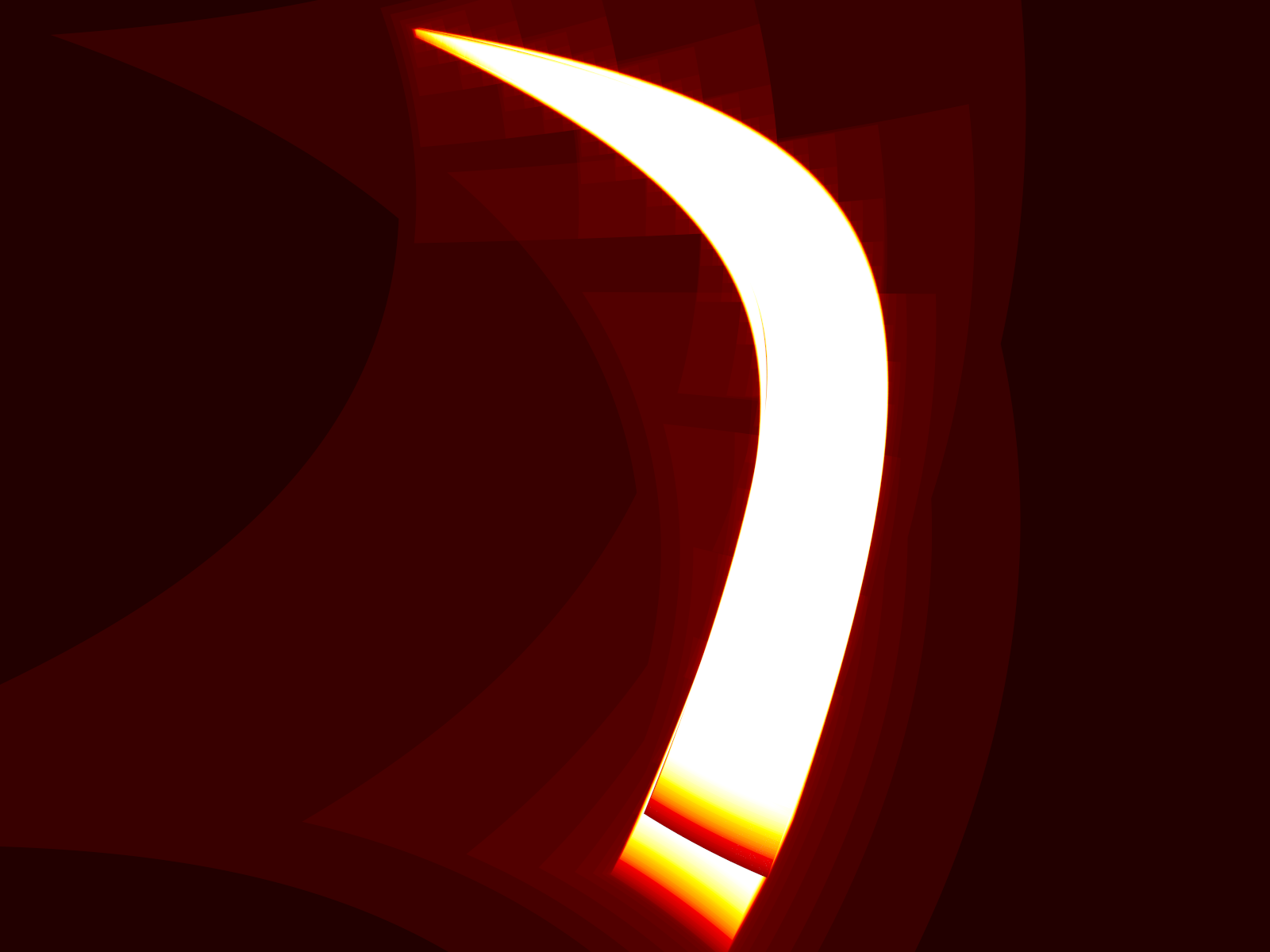}&%
		\includegraphics[width=.21\textwidth]{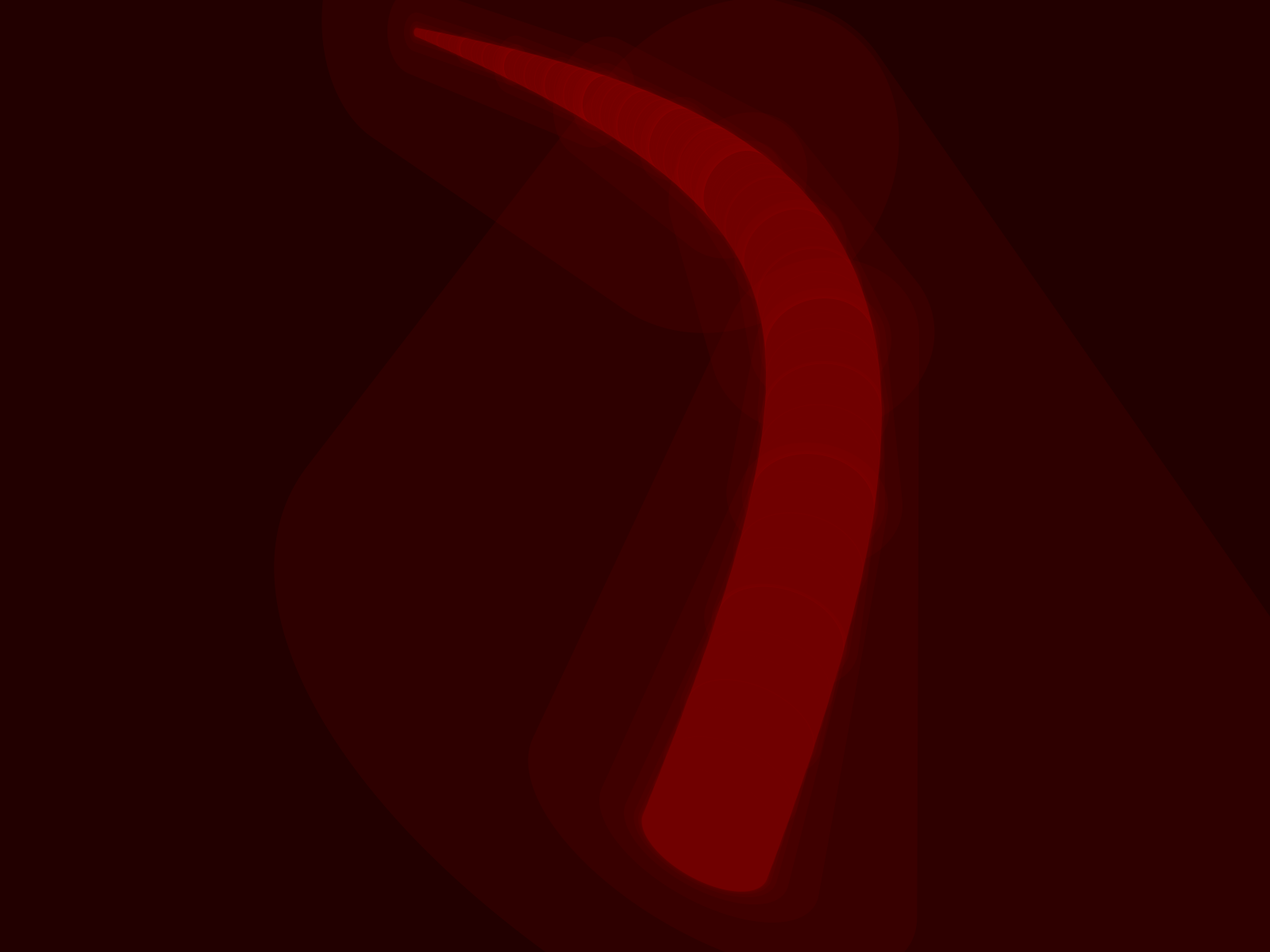}&%
		\begin{tikzpicture}[scale=0.233]
		\clip (-0.03, -0.03) rectangle (3, 12.03);
		\pgfdeclareverticalshading{flame} {3cm}{rgb(0cm)=(0,0,0); rgb(0.5cm)=(0,0,0); rgb(1cm)=(1,0,0); rgb(2cm)=(1,1,0); rgb(2.5cm)=(1,1,1); rgb(3cm)=(1,1,1)}
		\draw[shading=flame](0,0) rectangle (1,12);
		\node[anchor=south, font=\tiny, yshift=-3] at (1.6,0){0};
		\node[font=\tiny] at (1.6,4){15};
		\node[font=\tiny] at (1.6,8){30};
		\node[anchor=north, font=\tiny, yshift=3] at (1.6,12){75};
		\end{tikzpicture}%
	\end{tabular}
	\vspace*{-2ex}
	\caption{Pruning with oriented cylinders (right) dramatically lowers the total number of regions of interest by reducing the
number false positives compared to pruning with axis aligned bounding boxes (left), especially for a high number of
subdivisions.}
	\label{fig:heatmap-pruning}
\end{figure}

\Cref{fig:heatmap-pruning} compares the number of pruning tests required for our method to pruning with axis
aligned bounding boxes for a high number of subdivisions. As expected the overlapping axis aligned bounding boxes result
in an exponential growth of regions that cannot be pruned. Note that the bounding boxes are aligned to the main axis of the
coordinate system of each ray, hence they appear to be warped.

\Cref{fig:performance} shows the performance for intersecting $\sim$1 million rays with a single fiber on an NVIDIA
Titan V. While the relative performance does depend on the orientation and curvature of the fiber, as pruning with axis
aligned bounding boxes benefits from straight, aligned fibers, the main issue for the dramatic slowdown of the
state-of-the art remains: After a certain number of subdivisions, the bounding boxes of two curves resulting from the
subdivision of a region overlap almost entirely. Then, none of them can be pruned and a significant amount of additional
backtracking is required.

Nakamaru's method \cite{Nakamaru:2002} does not take into account the ray direction for deciding which subcurve is
checked first, therefore causing exponential growth of valid regions in the worst case. Hybrids between the two methods,
i.e. pruning with axis aligned bounding boxes, taking ray direction into account, and calculating the final intersection
with cylinders not only suffer from divergence caused by the additional test, but are mostly limited by the pruning
inefficiency, and therefore may be only valuable in rare cases.

Combining the fiber intersection with a top level hierarchy referencing fibers, or regions on fibers if
the fiber must be split beforehand, is straightforward. The top level hierarchy is primarily orthogonal to fiber intersection
unless fibers are partitioned into very small regions. Then, the increased efficiency of the pruning tests of the
presented method becomes even more important.

Note that in practice, intersection cost is almost always dominated by the performance of traversing the hierarchy
referencing the fibers and other geometry in the scene. Nevertheless, we calculate accurate intersections and normals on the
surface in a reliable way and do so either with a small overhead or -- if high precision is of concern -- dramatically
faster.

\section{Conclusion}

We have presented an algorithm that outperforms the state-of-the-art subdivision-based ray/fiber intersection method
significantly for a high number of subdivisions, and computes accurate intersections on the surface of the fiber with an
accurate normal. In addition, we also determine a precise parametric position. Even for small numbers of subdivisions,
which quite often lead to visible artifacts in Nakamaru's method \cite{Nakamaru:2002}, the approximation error of our
algorithm is well understood and the overhead of pruning with oriented cylinders instead of axis aligned bounding boxes
remains reasonable.
While the algorithm cannot handle arbitrary fibers, configurations that cannot be supported can be identified in
advance. Subdividing such fibers resolves the issues.

Future opportunities include displacements and arbitrary contours, both only requiring an additional intersection test
after pruning with conservative bounding cylinders.

\newcommand{\etalchar}[1]{$^{#1}$}

\newpage
\appendix

\section{Intersection of a unit ray with an infinite cylinder} \label{sec:appendix_cylinder}

The ray/cylinder intersection described by Cychosz et al \cite{Cychosz:1994} can be simplified for unit rays,
because a normalization of the cylinder axis is not required:

The smallest distance between the ray and an infinite line through the points $p$ and $q$, i.e. with axis $A := q - p$,
normalized to $\hat A := \frac{A}{|A|}$ is
\begin{align*}
d &= |p \cdot D|\textrm{, where}\\
D &= \frac{\hat R \times \hat A}{|\hat R \times \hat A|}.
\end{align*}
For $\bar{l}_A := \frac{1}{|A|}$,
\begin{align*}
\hat R \times \hat A &= \hat R \times (A \cdot \bar{l}_A) = (\hat R \times A) \cdot \bar{l}_A %
= (-a_y, a_x, 0)^T \cdot \bar{l}_A.
\end{align*}
Furthermore,
\begin{align*}
|\hat R \times \hat A| &= \sqrt{\bar{l}_A^2(a_y^2 + a_x^2)} = \bar{l}_A\sqrt{a_x^2 + a_y^2}.
\end{align*}
Therefore,
\begin{align*}
D &= \frac{\hat R \times \hat A}{|\hat R \times \hat A|}
= \frac{(-a_y, a_x, 0)^T \cdot \bar{l}_A}{\sqrt{a_x^2 + a_y^2} \cdot \bar{l}_A}
= \frac{(-a_y, a_x, 0)^T}{\sqrt{a_x^2 + a_y^2}},
\end{align*}
and the \textbf{squared} minimum distance between
the ray and the line is
\begin{align}
d^2 &= \frac{(a_xp_y - a_yp_x)^2}{a_x^2 + a_y^2} \label{eq:sqr_ray_line_distance}.
\end{align}

The ray hits the infinite cylinder though $p$ and $q$ with the radius $r$ if and only if
\begin{align*}
d^2 &\leq r^2.
\end{align*}
For the special cases of a line along the $z$ axis, in which the denominator of (\ref{eq:sqr_ray_line_distance}) would
be zero, the squared distance between the ray and the line is the two-dimensional distance of $p$ to the origin, i.e.
$d^2 = p_x^2 + p_y^2$.

Setting $g := a_x^2 + a_y^2$, the distance of the closest point of approach (cpa) of the ray and the cylinder from the ray origin is
\begin{align*}
t_{cpa} &= \frac{(p \times \hat A) \cdot D} {| \hat R \times \hat A |}\nonumber\\
&= \frac{\bar{l}_A(p \times A) \cdot \frac{(-a_y, a_x, 0)^T}{\sqrt{g}}} {\bar{l}_A\sqrt{g}}\nonumber\\
&= \frac{(p \times A) \cdot (-a_y, a_x, 0)^T}{g}\nonumber\\
&= \frac{(p_ya_z - p_za_y, p_za_x - p_xa_z, p_xa_y - p_ya_x)^T \cdot (-a_y, a_x, 0)^T}{g}\nonumber\\
&= \frac{-p_ya_ya_z + p_za_y^2 + p_za_x^2 - p_xa_za_x}{g}\nonumber\\
&= \frac{p_zg - a_z(p_ya_y + p_xa_x)}{g}\nonumber\\
&= p_z - \frac{a_z(p_ya_y + p_xa_x)}{g}.
\end{align*}

The intersections of the ray and the cylinder are located at $(0, 0, t \pm s)$, where
\begin{align*}
s &= \left|\frac{\sqrt{r^2 - d^2}}{\hat R \cdot \hat O}\right|&\textrm{and}\\
\hat O &= \frac{D \times \hat A}{|D \times \hat A|}.
\end{align*}

Again, a normalization of $A$ is not required, as its inverse length $\bar{l}_A$ cancels out:
\begin{align*}
D \times \hat A &= \frac{(-a_y, a_x, 0)^T}{\sqrt{g}} \times (A \cdot \bar{l}_A) = \left((-a_y, a_x, 0)^T \times A\right) \cdot \frac{\bar{l}_A}{\sqrt{g}}\\
&= (a_xa_z, a_ya_z, -g)^T \cdot \frac{\bar{l}_A}{\sqrt{g}},\\
\end{align*}
Hence,
\begin{align*}
|D \times \hat A| &= \sqrt{\frac{\bar{l}_A^2}{\sqrt{g}^2}(a_x^2a_z^2+a_y^2a_z^2 + g^2)} = \frac{\bar{l}_A}{\sqrt{g}} \sqrt{a_z^2(a_x^2 + a_y^2) + g^2} \nonumber \\
&= \frac{\bar{l}_A}{\sqrt{g}} \sqrt{a_z^2g + g^2}, \nonumber \\
\hat O &= \frac{(a_xa_z, a_ya_z, -g)^T}{\sqrt{a_z^2g + g^2}}
= \frac{(a_xa_z, a_ya_z, -g)^T}{\sqrt{g(a_z^2 + g)}}.
\end{align*}

For a unit ray with $\hat R = (0, 0, 1)^T$, we only need to consider
\begin{align*}
O_z &= -\frac{g}{\sqrt{g(a_z^2 + g)}}
= -\sqrt{\frac{g}{a_z^2 + g}},
\end{align*}
since all terms are positive.
Finally,
\begin{align*}
s &= \frac{\sqrt{r^2 - d^2}}{\sqrt{\frac{g}{a_z^2 + g}}}
= \sqrt{\frac{r^2 - d^2}{\frac{g}{a_z^2 + g}}}
= \sqrt{\frac{(r^2 - d^2)(a_z^2+g)}{g}},
\end{align*}
and the distance to the closest visible surface boundary is
\begin{align*}
t = \begin{cases}
t_{cpa} - s & t_{cpa} \geq s\\
t_{cpa} + s & \textrm{otherwise.}
\end{cases}
\end{align*}
If $t < 0$, the cylinder is behind the ray origin.

\onecolumn
\section{Curve Constraints}

\label{sec:appendix_constraints}
\subsection{Quadratic B\'ezier Curves}
\begin{lemma}
	The two subcurves defined by their control points $\left(p^L_0, p^L_1, p^L_2\right)$ and $\left(p^R_0, p^R_1, p^R_2\right)$
	resulting from de Casteljau subdivision \cite{Casteljau:1959} in the domain center of the quadratic B\'ezier curve defined by
	the control points $(p_0, p_1, p_2)$ can be enclosed by disjoint bounding volumes partitioned by a plane located in
	the split point $s := \frac{p_0 + 2p_1 + p_2}{4}$ orthogonal to the tangent in the split point $t := p_2 - p_0$ if and only if
	\begin{align}
		\langle {p_1 - p_0, p_1 - p_2} \rangle &\leq 0.\label{eq:constraint_quadratic_bezier}
	\end{align}
\end{lemma}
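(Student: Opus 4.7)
The plan is to reduce the geometric claim to an algebraic half-space condition on $(p_0,p_1,p_2)$ by exploiting the convex hull property of quadratic B\'ezier curves. Since the two oriented bounding cylinders are separated by the partitioning plane through $s$ with normal $t = p_2 - p_0$, disjointness is equivalent to each subcurve lying entirely in its designated closed half-space of that plane. Because a quadratic B\'ezier is contained in the convex hull of its three de~Casteljau control points, this membership reduces to a finite check on the six control points produced by the subdivision.

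Next I would write out the midpoint de~Casteljau split explicitly: the left subcurve has control points $p_0$, $(p_0+p_1)/2$, $s$, and the right subcurve has $s$, $(p_1+p_2)/2$, $p_2$. Substituting each into the signed distance $\langle \cdot - s, t \rangle$ and using $s = (p_0+2p_1+p_2)/4$, the shared endpoint $s$ contributes $0$, and the two interior midpoints contribute $\mp\,|t|^2/4$; these four checks are therefore automatic on the correct sides. What remains are only the two outer-endpoint constraints $\langle p_0 - s, t \rangle \leq 0$ and $\langle p_2 - s, t \rangle \geq 0$, which together characterize when the partitioning plane genuinely separates $p_0$ from $p_2$ and hence the two subcurves.

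The core algebraic step is to show that these two residual inequalities are jointly equivalent to $\langle p_1 - p_0, p_1 - p_2 \rangle \leq 0$. Setting $t_0 := p_1 - p_0$ and $t_1 := p_2 - p_1$ gives $t = t_0 + t_1$, $p_0 - s = -(3t_0 + t_1)/4$, and $p_2 - s = (t_0 + 3t_1)/4$, so both inner products expand into quadratic forms in $|t_0|^2$, $|t_1|^2$, and $\langle t_0, t_1 \rangle$. The sufficient direction is immediate: the equivalent rewriting $\langle t_0, t_1 \rangle \geq 0$, together with the positivity of $|t_0|^2$ and $|t_1|^2$, renders each form manifestly non-negative. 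I expect the main obstacle in the necessary direction: one must argue that whenever $\langle t_0, t_1 \rangle < 0$ at least one of the two half-space conditions is violated, which I would approach via a symmetry argument between the two subcurves combined with a careful analysis of the tight configurations of the quadratic forms; the $2\times 2$ matrix structure with off-diagonal entry $\langle t_0, t_1 \rangle$ should make the sign of that inner product the decisive quantity once both $|t_0|$ and $|t_1|$ are allowed to vary. Finally, I would note that the subdivision rules in Section~\ref{sec:representation} preserve $\langle t_0,t_1 \rangle \geq 0$, so the one-step characterization extends to recursive subdivision, matching the motivation given in the main text.
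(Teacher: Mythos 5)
Your reduction and the first two thirds of the argument track the paper's proof closely: both use the convex hull property to reduce disjointness to a sign check on the six subdivided control points, both observe that the shared point $s$ and the two interior midpoints pass automatically (indeed $p^L_1 - s = -(p_2-p_0)/4$ and $p^R_1 - s = (p_2-p_0)/4$, matching your $\mp|t|^2/4$), and both are left with the two outer-endpoint conditions $\langle p_0 - s, t\rangle \leq 0$ and $\langle p_2 - s, t\rangle \geq 0$. Your sufficiency direction is sound and is essentially the paper's: with $t_0 := p_1 - p_0$ and $t_1 := p_2 - p_1$ the residual conditions read $3|t_0|^2 + 4\langle t_0, t_1\rangle + |t_1|^2 \geq 0$ and $|t_0|^2 + 4\langle t_0, t_1\rangle + 3|t_1|^2 \geq 0$, and $\langle t_0, t_1\rangle \geq 0$ makes both manifestly nonnegative.

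The gap is in what you call the core algebraic step: the two residual inequalities are \emph{not} jointly equivalent to $\langle t_0, t_1\rangle \geq 0$, so the necessity argument you defer to a ``symmetry argument'' cannot be completed as planned. Take $|t_0| = |t_1| = \ell$; then both residual forms equal $4\left(\ell^2 + \langle t_0, t_1\rangle\right)$, which is nonnegative for \emph{every} angle between $t_0$ and $t_1$ by Cauchy--Schwarz, while $\langle t_0, t_1\rangle$ can be as negative as $-\ell^2$. Concretely, for $p_0 = (0,0)$, $p_1 = (1,0)$, $p_2 = (0.1,\, 0.436)$ all six control points lie on the correct closed sides of the plane, yet $\langle p_1 - p_0, p_1 - p_2\rangle = 0.9 > 0$. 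So the $2\times 2$ structure does not make the sign of $\langle t_0, t_1\rangle$ decisive; condition (\ref{eq:constraint_quadratic_bezier}) is strictly stronger than the control-point separation test. The paper does not claim your equivalence either: its displayed chains for $p_0$ and $p_2$ deliberately end in a one-way implication ($\Leftarrow$), establishing only that (\ref{eq:constraint_quadratic_bezier}) \emph{implies} the slab conditions $\langle p_1 - p_0, p_2 - p_0\rangle \geq 0$ and $\langle p_1 - p_2, p_0 - p_2\rangle \geq 0$ (via $\langle t_0, t\rangle = |t_0|^2 + \langle t_0, t_1\rangle$), i.e., the sufficiency half; a genuine ``only if'' would have to reason about the actual oriented bounding cylinders, which strictly contain the convex hulls, and neither your plan nor the paper's proof does that. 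You should either restrict your claim to sufficiency, as the paper effectively does, or supply a separate argument tied to the concrete bounding volumes for the converse.
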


\begin{proof}
	As B\'ezier curves are defined as a convex combination of the control points, the bounding volumes of the two sub
curves are disjointedly can be split by the plane located in $s$ and orthogonal to $t$ if
	\begin{align}
		\left\langle p^L_i - s, t \right\rangle &\leq 0 &\textrm{and}\\
		\left\langle p^R_i - s, t \right\rangle &\geq 0 &\forall i \in \lbrace 0, 1, 2 \rbrace .
	\end{align}

	De Casteljau subdivision in the domain center creates the control points
	\begin{align*}
		\left(p^L_0, p^L_1, p^L_2\right) &:= \left(p_0, \frac{p_0 + p_1}{2}, \frac{p_0 + 2 p_1 + p_2}{4}\right)&\textrm{and}\\
		\left(p^R_0, p^R_1, p^R_2\right) &:= \left(\frac{p_0 + 2 p_1 + p_2}{4}, \frac{p_1 + p_2}{2}, p_2\right)
	\end{align*}
	of the two subcurves.

	$p^L_2 = p^R_0 = s$ obviously fulfill the conditions by construction.

	For $p^L_0 := p_0$
	\begin{alignat}{2}
         && \left\langle {p_0 - \frac{p_0 + 2p_1 + p_2}{4}, p_2 - p_0} \right\rangle                                           &\leq 0\nonumber\\
		\LRA && \langle {3p_0 - 2p_1 - p_2, p_2 - p_0} \rangle                                                          &\leq 0\nonumber\\
		\LRA && \underbrace{\langle {p_0 - p_2, p_2 - p_0} \rangle}_{\leq 0} + \langle {2p_0 - 2p_1, p_2 - p_0} \rangle   &\leq 0\nonumber\\
		\LA  && \langle {p_0 - p_1, p_2 - p_0} \rangle                                                                  &\leq 0.
	\end{alignat}

	Analogously, checking $p^R_2 := p_2$ gives $\langle {p_1 - p_2, p_2 - p_0} \rangle \leq 0$. The reversed conditions
	\begin{align}
	\langle {p_1 - p_0, p_2 - p_0} \rangle &\geq 0 &\textrm{and}\label{eq:constraint_quadratic_bezier_subcurves_1}\\
	\langle {p_1 - p_2, p_0 - p_2} \rangle &\geq 0\label{eq:constraint_quadratic_bezier_subcurves_2}
	\end{align}
	can be combined to
	\begin{align}
		\langle {p_1 - p_0, p_1 - p_2} \rangle &\leq 0,
	\end{align}
	as the projections of $p_1 - p_0$ and $p_1 - p_2$ onto $p_2 - p_0$ must have different signs to satisfy both conditions (so that
	$p_1$ is inside the hatched area in \Cref{fig:quadratic_bezier_subcurves}).

	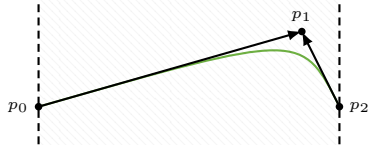
\begin{figure}[h]
		\centering
		\begin{tikzpicture}
			\coordinate (A) at (0, 0);
			\coordinate (B) at (3.5, 1);
			\coordinate (C) at (4, 0);
			\coordinate (T) at ($-1.0*(A)+(C)$);
			\newdimen\xta
			\newdimen\yta
			\pgfextractx{\xta}{\pgfpointanchor{T}{center}}
			\pgfextracty{\yta}{\pgfpointanchor{T}{center}}
			\coordinate (N) at (-\yta, \xta);
			\coordinate (S) at ($1.0/8.0*(A)+6.0/8.0*(B)+1.0/8.0*(C)$);
			\draw[HighlightColor, thick] (A) .. controls (B) and (B)  .. (C);
			\draw[thick, -latex] (A) -- (B);
			\draw[thick, -latex] (C) -- (B);
			\fill[black] (A) circle (0.05);
			\fill[black] (B) circle (0.05);
			\fill[black] (C) circle (0.05);
			\node[left] at (0, 0) {\tiny $p_0$};
			\node[above] at (B) {\tiny $p_1$};
			\node[right] at (C) {\tiny $p_2$};
			\draw[thick, densely dashed] ($(A) - (0, 0.5)$) -- ($(A) + (0, 1.5)$);
			\draw[thick, densely dashed] ($(C) - (0, 0.5)$) -- ($(C) + (0, 1.5)$);
			\fill[pattern = north west lines, opacity=0.2] ($(A) - (0, 0.5)$) rectangle ($(C) + (0, 1.5)$);
		\end{tikzpicture}
		\caption{$p_1$ must be inside the hatched area if and only if (\ref{eq:constraint_quadratic_bezier_subcurves_1}) and
(\ref{eq:constraint_quadratic_bezier_subcurves_2}) are satisfied. Only in that case
(\ref{eq:constraint_quadratic_bezier}) is also fulfilled.}
		\label{fig:quadratic_bezier_subcurves}
	\end{figure}

	Finally, for $p^L_1 := \frac{1}{2}p_0 + \frac{1}{2}p_1$ the conditions are always met, as
	\begin{alignat}{2}
         && \left\langle {p^L_1 - s, t} \right\rangle                                               &\leq 0\nonumber\\
		\LRA && \left\langle {\frac{p_0 + p_1}{2} - \frac{p_0 + 2p_1 + p_2}{4}, p_2 - p_0} \right\rangle &\leq 0\nonumber\\
		\LRA && \langle {2p_0 + 2p_1 - (p_0 + 2p_1 + p_2), p_2 - p_0} \rangle                 &\leq 0\nonumber\\
		\LRA && %
		\langle {p_0 - p_2, p_2 - p_0} \rangle%
		&\leq 0.
	\end{alignat}
	The remaining proof for $p^R_1$ is analogous. As all control points are now on one side of the splitting plane, their
convex hull property ensures that all control points of further subcurves and the subcurve itself must also be on this
side.
\end{proof}

\begin{lemma}
	Splitting the quadratic B\'ezier curve fulfilling (\ref{eq:constraint_quadratic_bezier}) in the domain center results
in two subcurves that both also satisfy (\ref{eq:constraint_quadratic_bezier}). Hence, all nested subcurves fulfill
the condition.
\end{lemma}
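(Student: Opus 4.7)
The plan is to verify the constraint directly by applying the de Casteljau subdivision formulas already provided above, computing $\langle p^L_1 - p^L_0,\, p^L_1 - p^L_2 \rangle$ and $\langle p^R_1 - p^R_0,\, p^R_1 - p^R_2 \rangle$ explicitly, and showing that both are nonpositive as a consequence of the hypothesis on $(p_0, p_1, p_2)$. Induction on the subdivision depth then yields the claim for all nested subcurves.

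First I would compute the two relevant differences for the left subcurve:
\begin{align*}
p^L_1 - p^L_0 &= \tfrac{p_0+p_1}{2} - p_0 = \tfrac{1}{2}(p_1 - p_0),\\
p^L_1 - p^L_2 &= \tfrac{p_0+p_1}{2} - \tfrac{p_0+2p_1+p_2}{4} = \tfrac{1}{4}(p_0 - p_2).
\end{align*}
Their inner product is $-\tfrac{1}{8}\langle p_1 - p_0,\, p_2 - p_0\rangle$, so the left subcurve satisfies the constraint iff $\langle p_1 - p_0,\, p_2 - p_0 \rangle \geq 0$. But the hypothesis gives
\begin{align*}
0 \;\geq\; \langle p_1 - p_0,\, p_1 - p_2 \rangle = |p_1 - p_0|^2 - \langle p_1 - p_0,\, p_2 - p_0\rangle,
\end{align*}
hence $\langle p_1 - p_0,\, p_2 - p_0 \rangle \geq |p_1 - p_0|^2 \geq 0$, as required.

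The right subcurve is symmetric: $p^R_1 - p^R_0 = \tfrac{1}{4}(p_2 - p_0)$ and $p^R_1 - p^R_2 = \tfrac{1}{2}(p_1 - p_2)$, so their inner product equals $-\tfrac{1}{8}\langle p_2 - p_0,\, p_2 - p_1\rangle$, and the analogous decomposition
\begin{align*}
0 \;\geq\; \langle p_1 - p_0,\, p_1 - p_2 \rangle = \langle p_2 - p_0,\, p_1 - p_2\rangle + |p_1 - p_2|^2
\end{align*}
yields $\langle p_2 - p_0,\, p_2 - p_1\rangle \geq |p_1 - p_2|^2 \geq 0$.

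Since both subcurves inherit the defining inequality, the statement about all nested subcurves follows by an obvious induction on subdivision depth. I expect no real obstacle: the only nontrivial observation is the algebraic identity $\langle a, b\rangle = |a|^2 - \langle a, a-b\rangle$, which turns the hypothesis into a bound on exactly the quantity needed and makes both subcases collapse to a one-line estimate.
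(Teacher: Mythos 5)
Your proof is correct and takes essentially the same route as the paper's: both reduce the left and right subcurve constraints to the sign conditions $\langle p_1-p_0,\,p_2-p_0\rangle\ge 0$ and $\langle p_2-p_1,\,p_2-p_0\rangle\ge 0$ via the identical de Casteljau differences. The only difference is that the paper simply cites these two auxiliary inequalities from its proof of the preceding lemma, whereas you re-derive them from the hypothesis using the identity $\langle a,b\rangle = \lvert a\rvert^2 - \langle a,\,a-b\rangle$, which makes your argument self-contained and explicitly establishes the implication direction (hypothesis implies the auxiliary inequalities) that the paper handles only informally through its geometric picture of $p_1$ lying in a slab.
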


\begin{proof}
For the left subcurve
	\begin{alignat}{2}
				 && \left\langle {p^L_1 - p^L_0, p^L_1 - p^L_2} \right\rangle                                                &\leq 0\nonumber\\
		\LRA && \left\langle {\frac{p_0 + p_1}{2} - p_0, \frac{p_0 + p_1}{2} - \frac{p_0 + 2p_1 + p_2}{4}} \right\rangle &\leq 0\nonumber\\
		\LRA && \left\langle {p_1 - p_0, 2p_0 + 2p_1 - p_0 - 2p_1 - p_2} \right\rangle                                   &\leq 0\nonumber\\
		\LRA && \underbrace{\langle {p_1 - p_0, p_0 - p_2} \rangle}_{\geq 0\textrm{ by Eq.~(\ref{eq:constraint_quadratic_bezier_subcurves_1})}} &\leq 0,\nonumber%
	\end{alignat}
	and for the right subcurve
	\begin{alignat}{2}
				 && \left\langle {p^R_1 - p^R_0, p^R_1 - p^R_2} \right\rangle                                                &\leq 0\nonumber\\
		\LRA && \left\langle {\frac{p_1 + p_2}{2} - \frac{p_0 + 2p_1 + p_2}{4}, \frac{p_1 + p_2}{2} - p_2} \right\rangle &\leq 0\nonumber\\
		\LRA && \langle {2p_1 + 2p_2 - (p_0 + 2p_1 + p_2), p_1 + p_2 - 2p_2} \rangle                          &\leq 0\nonumber\\
		\LRA && \langle {p_2 - p_0, p_1 - p_2} \rangle                                                       &\leq 0\nonumber\\
		\LRA && \underbrace{\langle {p_2 - p_1, p_0 - p_2} \rangle}_{\geq 0\textrm{ by Eq.~(\ref{eq:constraint_quadratic_bezier_subcurves_2})}} &\leq 0.\nonumber%
	\end{alignat}
\end{proof}

\subsection{Cubic B\'ezier Curves}
\begin{lemma}
	The two subcurves defined by their control points $\left(p^L_0, p^L_1, p^L_2, p^L_3\right)$ and $\left(p^R_0, p^R_1, p^R_2, p^R_3\right)$
	resulting from de Cateljau subdivision \cite{Casteljau:1959} in the domain center of a cubic B\'ezier curve defined by
	the control points $(p_0, p_1, p_2, p_3)$ can be enclosed in disjoint bounding volumes partitioned by a plane located in
	the split point $s := \frac{1}{8}(p_0 + 3p_1 + 3p_2 + p_3)$ and orthogonal to the tangent in the split point $t :=
	\frac{3}{4}(-p_0 - p_1 + p_2 + p_3)$ if
	\begin{align}
		\langle {p_2 - p_0, p_1 - p_0} \rangle &\geq 0 & \qquad\textrm{and}\label{eq:cubic_bezier_p2p0p1p0}\\
		\langle {p_3 - p_1, p_1 - p_0} \rangle &\geq 0 & \qquad\textrm{and}\label{eq:cubic_bezier_p3p1p1p0}\\
		\langle {p_3 - p_1, p_3 - p_2} \rangle &\geq 0 & \qquad\textrm{and}\label{eq:cubic_bezier_p3p1p3p2}\\
		\langle {p_2 - p_0, p_3 - p_2} \rangle &\geq 0 & \qquad\textrm{and}\label{eq:cubic_bezier_p2p0p3p2}\\
		\langle {p_2 - p_0, p_3 - p_1} \rangle &\geq 0.\label{eq:cubic_bezier_p2p0p3p1}
	\end{align}
\end{lemma}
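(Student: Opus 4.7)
The plan mirrors the quadratic case. Apply de Casteljau subdivision at $u = 1/2$ to obtain the eight subdivided control points, and then verify that each $p^L_i$ satisfies $\langle p^L_i - s, t \rangle \leq 0$ and each $p^R_i$ satisfies $\langle p^R_i - s, t \rangle \geq 0$. Once this is established, the convex hull property of Bézier curves places the two subcurves (and all further nested control points) entirely on opposite sides of the partitioning plane through $s$ with normal $t$, so the disjoint bounding cylinders constructed by the algorithm do indeed separate the two halves.

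The subdivided control points are $p^L_0 = p_0$, $p^L_1 = \tfrac{1}{2}(p_0+p_1)$, $p^L_2 = \tfrac{1}{4}(p_0+2p_1+p_2)$, $p^L_3 = s$, $p^R_0 = s$, $p^R_1 = \tfrac{1}{4}(p_1+2p_2+p_3)$, $p^R_2 = \tfrac{1}{2}(p_2+p_3)$, $p^R_3 = p_3$, so $p^L_3$ and $p^R_0$ are trivial. For the remaining six points, I would use the factorization $\tfrac{4}{3}t = (p_2 - p_0) + (p_3 - p_1)$ together with expansions such as $8(p_0 - s) = (p_0 - p_3) + 3(p_0 - p_1) + 3(p_0 - p_2)$ to write each $\langle p^X_i - s, t \rangle$ as an explicit sum of inner products of pairwise differences of control points. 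Each such term is then matched either against one of the five hypothesized constraints (\ref{eq:cubic_bezier_p2p0p1p0})--(\ref{eq:cubic_bezier_p2p0p3p1}) or against the trivial inequality $\langle v, -v \rangle = -\|v\|^2 \leq 0$.

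Two small manipulations recur throughout. First, any pairing of a difference with its own negation is non-positive by the self-pairing observation. Second, whenever a "long" difference such as $p_0 - p_3$ (which does not directly match any constraint) appears, I would insert an intermediate control point: for example $p_0 - p_3 = (p_0 - p_2) + (p_2 - p_3)$ turns $\langle p_0 - p_3, p_2 - p_0 \rangle$ into $-\|p_2 - p_0\|^2 - \langle p_3 - p_2, p_2 - p_0 \rangle$, both non-positive by self-pairing and by (\ref{eq:cubic_bezier_p2p0p3p2}) respectively. The alternative splitting $p_0 - p_3 = (p_0 - p_1) + (p_1 - p_3)$ handles pairings against $p_3 - p_1$ using (\ref{eq:cubic_bezier_p3p1p1p0}). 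The right-hand control points $p^R_1, p^R_2, p^R_3$ are treated by the symmetric argument obtained from the substitution $(p_0, p_1) \leftrightarrow (p_3, p_2)$, which swaps the roles of the constraints (\ref{eq:cubic_bezier_p2p0p1p0})$\leftrightarrow$(\ref{eq:cubic_bezier_p3p1p3p2}) and (\ref{eq:cubic_bezier_p3p1p1p0})$\leftrightarrow$(\ref{eq:cubic_bezier_p2p0p3p2}) while leaving (\ref{eq:cubic_bezier_p2p0p3p1}) invariant.

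The main obstacle is purely book-keeping: for the interior subdivided points $p^L_1, p^L_2, p^R_1, p^R_2$ the expanded inner product has up to a dozen terms, and one must pick the correct telescoping $p_i - p_j = (p_i - p_k) + (p_k - p_j)$ so that every residual term matches a listed constraint with the correct sign. Nothing subtle happens beyond that; there are no cancellations one needs to worry about, and the weights with which the five constraints are combined come out manifestly non-negative. Once all eight half-space inequalities hold, convexity extends the conclusion from the control polygon to the subcurves themselves, and a companion lemma (analogous to the second quadratic lemma) would then show that the five constraints are inherited by each subcurve, permitting this bisection to be applied inductively at every subdivision level.
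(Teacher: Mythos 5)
Your plan is correct and follows essentially the same route as the paper: verify $\langle p^L_i - s, t\rangle \leq 0$ and $\langle p^R_i - s, t\rangle \geq 0$ for all eight subdivided control points by expanding against $\tfrac{4}{3}t = (p_2-p_0)+(p_3-p_1)$ and matching terms to the five constraints or to $-\|v\|^2 \leq 0$, then invoke the convex hull property. The only (immaterial) difference is at the endpoints: you expand $\langle p_0 - s, t\rangle$ directly and telescope the $p_0-p_3$ term, whereas the paper instead establishes $\langle p^L_2 - p^L_0, t\rangle \geq 0$ and chains it with the already-proven bound for $p^L_2$; both variants close with the same constraints.
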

\begin{proof}
	As a B\'ezier curve results from a convex combination of its control points, the bounding volumes of the two sub
curves can disjointedly be split by the plane located in $s$ and orthogonal to $t$ if
	\begin{align*}
		\left\langle p^L_i - s, t \right\rangle &\leq 0 &\textrm{and}\\
		\left\langle p^R_i - s, t \right\rangle &\geq 0 &\forall i \in \lbrace 0, 1, 2, 3 \rbrace .
	\end{align*}

	De Casteljau subdivision in the domain center creates the control points
	\begin{align*}
		\left(p^L_0, p^L_1, p^L_2, p^L_3\right) &:= \left(p_0, \frac{p_0 + p_1}{2}, \frac{p_0 + 2 p_1 + p_2}{4}, \frac{p_0 + 3p_1 + 3p_2 + p_3}{8}\right)&\textrm{and}\\
		\left(p^R_0, p^R_1, p^R_2, p^R_3\right) &:= \left(\frac{p_0 + 3p_1 + 3p_2 + p_3}{8}, \frac{p_1 + 2 p_2 + p_3}{4}, \frac{p_2 + p_3}{2}, p_3\right)
	\end{align*}
	of the two subcurves.

	Obviously $p^L_3 = p^R_0 = s$ meet the conditions by construction with the given constraints\footnote{Note that the
curves do touch in $s$ and hence their bounding volumes would have a shared plane. In practice, we still work with
disjoint bounding volumes by assigning everything on the partitioning plane explicitly to the subcurve tested first; the
result obviously remains the same.}.

	For $p^L_2 = \frac{p_0 + 2 p_1 + p_2}{4}$
	\begin{alignat}{2}
	       && \left\langle {p^L_2 - s, t} \right\rangle &\leq 0\nonumber\\
    \LRA && \left\langle {\frac{p_0 + 2p_1 + p_2}{4} - \frac{p_0 + 3p_1 + 3p_2 + p_3}{8}, \frac{3(-p_0 - p_1 + p_2 + p_3)}{4}} \right\rangle &\leq 0\nonumber\\
		\LRA && \langle {2p_0 + 4p_1 + 2p_2 - (p_0 + 3p_1 + 3p_2 + p_3), -p_0 - p_1 + p_2 + p_3} \rangle                              &\leq 0\nonumber\\
		\LRA && \underbrace{\langle {p_0 + p_1 - p_2 - p_3, -p_0 - p_1 + p_2 + p_3} \rangle}_{\leq 0}                                 &\leq 0.\nonumber
	\end{alignat}

	For $p^L_1 = \frac{p_0 + p_1}{2}$
	\begin{alignat}{2}
	       && \left\langle {p^L_1 - s, t} \right\rangle &\leq 0\nonumber\\
    \LRA && \left\langle {\frac{p_0 + p_1}{2} - \frac{p_0 + 3p_1 + 3p_2 + p_3}{8}, \frac{3(-p_0 - p_1 + p_2 + p_3)}{4}} \right\rangle                                                         &\leq 0\nonumber\\
		\LRA && \langle {4p_0 + 4p_1 - (p_0 + 3p_1 + 3p_2 + p_3), -p_0 - p_1 + p_2 + p_3} \rangle                                                                                      &\leq 0\nonumber\\
		\LRA && \langle {3p_0 + p_1 - 3p_2 - p_3, -p_0 - p_1 + p_2 + p_3} \rangle                                                                                                      &\leq 0\nonumber\\
		\LRA && \langle {3(p_2 - p_0) + (p_3 - p_1), (p_2 - p_0) + (p_3 - p_1)} \rangle                                                                                                &\geq 0\nonumber\\
		\LRA && 3\underbrace{\langle {p_2 - p_0, p_2 - p_0} \rangle}_{\geq 0} + 4\underbrace{\langle {p_2 - p_0, p_3 - p_1} \rangle}_{\geq 0\textrm{ by Eq.~(\ref{eq:cubic_bezier_p2p0p3p1})}} + \underbrace{\langle {p_3 - p_1, p_3 - p_1} \rangle}_{\geq 0} &\geq 0.\nonumber
	\end{alignat}

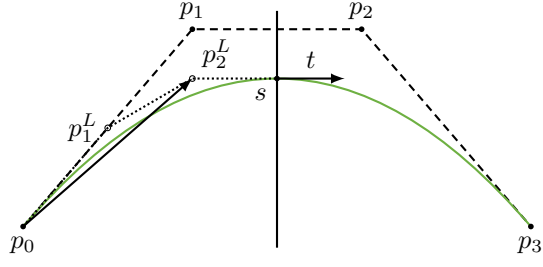
\begin{figure}[htbp]
\center
\begin{tikzpicture}[scale = 0.75]
\coordinate (A) at (0, 0);
\coordinate (B) at (3, 3.5);
\coordinate (C) at (6, 3.5);
\coordinate (D) at (9, 0);
\coordinate (AB) at ($0.5*(A)+0.5*(B)$);
\coordinate (BC) at ($0.5*(C)+0.5*(B)$);
\coordinate (CD) at ($0.5*(C)+0.5*(D)$);
\coordinate (ABBC) at ($0.5*(AB) + 0.5*(BC)$);
\draw[thick, densely dashed] (A) -- (B) -- (C) -- (D);
\draw[thick, HighlightColor] (A) .. controls (B) and (C)  .. (D);
\fill (A) circle (0.05);
\fill (B) circle (0.05);
\fill (C) circle (0.05);
\fill (D) circle (0.05);
\draw (AB) circle (0.05);
\draw (ABBC) circle (0.05);

			\coordinate (T) at ($-1.0*(A)-(B)+(C)+(D)$);
			\newdimen\xta
			\newdimen\yta
			\pgfextractx{\xta}{\pgfpointanchor{T}{center}}
			\pgfextracty{\yta}{\pgfpointanchor{T}{center}}
			\coordinate (N) at (-\yta, \xta);
			\coordinate (S) at ($1.0/8.0*(A)+3.0/8.0*(B)+3.0/8.0*(C)+1.0/8.0*(D)$);

\fill (S) circle (0.05);
\draw[thick] ($(S) - 0.25*(N)$) -- ($(S) + 0.1*(N)$);
\draw[thick, -latex] (S) -- ($(S) + 0.1*(T)$) node[sloped, midway, above, font=\small] {$t$};
\draw[thick, -latex, font=\small] (A) -- (ABBC);%
\node[anchor=north, font=\small] at (A) {$p_0$};
\node[anchor=east, font=\small] at (AB) {$p^L_1$};
\node[anchor=south west, font=\small] at (ABBC) {$p^L_2$};
\node[anchor=south, font=\small] at (B) {$p_1$};
\node[anchor=south, font=\small] at (C) {$p_2$};
\node[anchor=north, font=\small] at (D) {$p_3$};
\node[anchor=north east, font=\small] at (S) {$s$};
\draw[thick, densely dotted] (A) -- (AB) -- (ABBC) -- (S);
\end{tikzpicture}
\caption{$p_0$ must be located on the left side of the separation plane together with $p^L_2$ if $\left\langle {p^L_2 - p_0, t} \right\rangle
\geq 0$.}
\label{fig:cubic_bezier_p0}
\end{figure}

	Finally, knowing that $p^L_2$ is already on the correct side of the splitting plane (see also \Cref{fig:cubic_bezier_p0}), $p^L_0$ is
	on the same side of the plane if
	\begin{alignat}{2}
         && \left\langle {p^L_2 - p^L_0, t} \right\rangle                                                                                                                                                                                    & \geq 0\nonumber\\
		\LRA && \left\langle {\frac{p_0 + 2 p_1 + p_2}{4} - p_0, \frac{3\left(-p_0 -p_1 + p_2 + p_3 \right)}{4}} \right\rangle                                                                                                                   & \geq 0\nonumber\\
		\LRA && \langle {-3p_0 + 2 p_1 + p_2, -p_0 -p_1 + p_2 + p_3} \rangle                                                                                                                                                          & \geq 0\nonumber\\
		\LRA && \langle {2(p_1 - p_0) + (p_2 - p_0), (p_2 - p_0) + (p_3 - p_1 )} \rangle                                                                                                                                              & \geq 0\nonumber\\
		\LRA && 2 \underbrace{\langle {p_1 - p_0, p_2 - p_0} \rangle}_{\geq 0\textrm{ by Eq.~(\ref{eq:cubic_bezier_p2p0p1p0})}} + 2 \underbrace{\langle {p_1 - p_0, p_3 - p_1} \rangle}_{\geq 0\textrm{ by Eq.~(\ref{eq:cubic_bezier_p3p1p1p0})}} &\nonumber\\
         && + \underbrace{\langle {p_2 - p_0, p_2 - p_0} \rangle}_{\geq 0}  + \underbrace{\langle {p_2 - p_0, p_3 - p_1} \rangle}_{\geq 0\textrm{ by Eq.~(\ref{eq:cubic_bezier_p2p0p3p1})}}                                             & \geq 0.\nonumber
	\end{alignat}

For the right subcurve
\begin{alignat}{2}
       && \left\langle {p^R_1 - s, t} \right\rangle                                                                                              & \geq 0\nonumber\\
	\LRA && \left\langle {\frac{p_1 + 2p_2 + p_3}{4} - \frac{p_0 + 3p_1 + 3p_2 + p_3}{8}, \frac{3\left(-p_0-p_1+p_2+p_3 \right)}{4}} \right\rangle & \geq 0\nonumber\\
	\LRA && \langle {2p_1 + 4p_2 + 2p_3 - p_0 - 3p_1 - 3p_2 - p_3, -p_0 - p_1 + p_2 + p_3} \rangle                                      & \geq 0\nonumber\\
	\LRA && \underbrace{\langle { -p_0 - p_1 + p_2 + p_3, -p_0 - p_1 + p_2 + p_3} \rangle}_{\geq 0}                                     & \geq 0,\nonumber
\end{alignat}
\begin{alignat}{2}
       && \left\langle {p^R_2 - s, t} \right\rangle                                                                                                                                                                                                           & \geq 0\nonumber\\
	\LRA && \left\langle {\frac{p_2 + p_3}{2} - \frac{p_0 + 3p_1 + 3p_2 + p_3}{8}, \frac{3\left(-p_0-p_1+p_2+p_3 \right)}{4}} \right\rangle                                                                                                                     & \geq 0\nonumber\\
	\LRA && \langle {4p_2 + 4p_3 - p_0 - 3p_1 - 3p_2 - p_3, -p_0 - p_1 + p_2 + p_3} \rangle                                                                                                                                                          & \geq 0\nonumber\\
	\LRA && \langle {-p_0 - 3p_1 + p_2 + 3p_3, -p_0 - p_1 + p_2 + p_3} \rangle                                                                                                                                                                       & \geq 0\nonumber\\
	\LRA && \langle {(p_2 - p_0) + 3(p_3 - p_1), (p_2 - p_0) + (p_3 - p_1)} \rangle                                                                                                                                                                  & \geq 0\nonumber\\
	\LRA && \underbrace{\langle {p_2 - p_0, p_2 - p_0} \rangle}_{\geq 0} + 4\underbrace{\langle {p_2 - p_0, p_3 - p_1} \rangle}_{\geq 0\textrm{ by Eq.~(\ref{eq:cubic_bezier_p2p0p3p1})}} + 3 \underbrace{\langle {p_3 - p_1, p_3 - p_1} \rangle}_{\geq 0} & \geq 0,\nonumber
\end{alignat}
and again knowing that $p^R_1$ is already on the correct side of the splitting plane, $p^R_3$ is on the same side of the
plane if
\begin{alignat}{2}
       && \left\langle {p^R_3 - p^R_1, t} \right\rangle                                                                                                                                                                                    & \geq 0\nonumber\\
	\LRA && \left\langle {p_3 - \frac{p_1 + 2p_2 + p_3}{4}, \frac{3\left(-p_0 -p_1 + p_2 + p_3 \right)}{4}} \right\rangle                                                                                                                    & \geq 0\nonumber\\
	\LRA && \langle {-p_1 - 2p_2 + 3p_3, -p_0 -p_1 + p_2 + p_3} \rangle                                                                                                                                                           & \geq 0\nonumber\\
	\LRA && \langle {(p_3 - p_1) + 2 (p_3 - p_2), (p_2 - p_0) + (p_3 - p_1)} \rangle                                                                                                                                              & \geq 0\nonumber\\
	\LRA && \underbrace{\langle {p_3 - p_1, p_2 - p_0} \rangle}_{\geq 0\textrm{ by Eq.~(\ref{eq:cubic_bezier_p2p0p3p1})}} + \underbrace{\langle {p_3 - p_1, p_3 - p_1} \rangle}_{\geq 0}                                                &\nonumber\\
       && + 2\underbrace{\langle {p_3 - p_2, p_2 - p_0} \rangle}_{\geq 0\textrm{ by Eq.~(\ref{eq:cubic_bezier_p2p0p3p2})}} + 2\underbrace{\langle {p_3 - p_2, p_3 - p_1} \rangle}_{\geq 0\textrm{ by Eq.~(\ref{eq:cubic_bezier_p3p1p3p2})}} & \geq 0.\nonumber\\
\end{alignat}
	Finally, as a B\'ezier curve is a convex combination of its control points, the curve is always enclosed in their
convex hull, and therefore all subcurves in the two domains are completely on one side of the split plane each.
\end{proof}

\begin{lemma}
Splitting the cubic B\'ezier curve fulfilling the conditions (\ref{eq:cubic_bezier_p2p0p1p0}),
(\ref{eq:cubic_bezier_p3p1p1p0}), (\ref{eq:cubic_bezier_p3p1p3p2}), (\ref{eq:cubic_bezier_p2p0p3p2}),
(\ref{eq:cubic_bezier_p2p0p3p1}) in the middle results in two subcurves that also satisfy them all. Hence, all nested
subcurves from further bisection fulfill the conditions.
\end{lemma}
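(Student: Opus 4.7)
The plan is to verify each of the five constraints (\ref{eq:cubic_bezier_p2p0p1p0})--(\ref{eq:cubic_bezier_p2p0p3p1}) directly for both subcurves produced by midpoint de Casteljau subdivision, and then conclude by induction on subdivision depth. The key observation driving the whole argument is that the original five hypotheses are phrased entirely in terms of the four ``edge vectors'' $p_1-p_0$, $p_2-p_0$, $p_3-p_1$, $p_3-p_2$, so if we can express every difference $p^L_i-p^L_j$ and $p^R_i-p^R_j$ that appears in the five new constraints as a non-negative linear combination of these four vectors, bilinearity will immediately reduce each new inner product to a non-negative combination of the hypothesis inner products plus squared norms.

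First I would compute, using the control points listed in the previous lemma, the explicit expansions
\begin{align*}
p^L_1 - p^L_0 &= \tfrac{1}{2}(p_1-p_0), & p^L_2 - p^L_0 &= \tfrac{1}{4}\bigl(2(p_1-p_0)+(p_2-p_0)\bigr),\\
p^L_3 - p^L_1 &= \tfrac{1}{8}\bigl(3(p_2-p_0)+(p_3-p_1)\bigr), & p^L_3 - p^L_2 &= \tfrac{1}{8}\bigl((p_2-p_0)+(p_3-p_1)\bigr),
\end{align*}
and, symmetrically for the right subcurve,
\begin{align*}
p^R_1 - p^R_0 &= \tfrac{1}{8}\bigl((p_2-p_0)+(p_3-p_1)\bigr), & p^R_2 - p^R_0 &= \tfrac{1}{8}\bigl((p_2-p_0)+3(p_3-p_1)\bigr),\\
p^R_3 - p^R_1 &= \tfrac{1}{4}\bigl(2(p_3-p_2)+(p_3-p_1)\bigr), & p^R_3 - p^R_2 &= \tfrac{1}{2}(p_3-p_2).
\end{align*}
All coefficients on the right-hand sides are non-negative, which is precisely what we need.

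Next, for each of the five target inner products I would substitute these expansions and expand by bilinearity. After pulling out the positive prefactor, what remains is a sum of terms of two kinds: (i) inner products that appear verbatim in the hypotheses (\ref{eq:cubic_bezier_p2p0p1p0})--(\ref{eq:cubic_bezier_p2p0p3p1}), and hence are $\geq 0$; (ii) squared norms $|p_1-p_0|^2$, $|p_2-p_0|^2$, $|p_3-p_1|^2$, $|p_3-p_2|^2$, which are trivially $\geq 0$. In each of the ten resulting expansions, every such term carries a non-negative coefficient, so the sum is $\geq 0$. The proof of the statement for the first bisection is complete, and the inductive extension to all nested subcurves follows by simply re-invoking the lemma on each child.

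The main (and essentially only) obstacle is algebraic bookkeeping: one must perform all ten expansions and confirm that no negative coefficient sneaks in. The most delicate cases are the ``cross-coupled'' ones $\langle p^L_2-p^L_0,\, p^L_3-p^L_1\rangle$ and $\langle p^R_2-p^R_0,\, p^R_3-p^R_1\rangle$, which simultaneously invoke all five hypothesis inner products; these deserve to be written out fully. The other eight cases are strictly simpler, because the corresponding difference vectors involve only two of the four edge vectors each, so at most four hypothesis inner products and two squared norms can appear.
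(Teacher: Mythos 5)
Your proposal is correct and follows essentially the same route as the paper's proof: both verify all ten inequalities by writing the relevant control-point differences as non-negative combinations of the edge vectors $p_1-p_0$, $p_2-p_0$, $p_3-p_1$, $p_3-p_2$ and expanding by bilinearity into hypothesis inner products and squared norms (your expansions all check out, and your organization of precomputing the eight decompositions up front is if anything cleaner than the paper's inline derivations). The only quibble is the side remark that the cross-coupled cases ``invoke all five hypothesis inner products'' --- since each factor is a combination of only two edge vectors, at most four bilinear terms appear in any expansion --- but this does not affect the argument.
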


\begin{proof}
For the left subcurve
\begin{alignat}{2}
       && \left\langle {p_2^L - p_0^L, p_1^L - p_0^L} \right\rangle                                                                                &\geq 0\nonumber\\
	\LRA && \left\langle {\frac{p_0 + 2 p_1 + p_2}{4} - p_0, \frac{p_0 + p_1}{2} - p_0} \right\rangle                                                &\geq 0\nonumber\\
	\LRA && \langle {\frac{-3p_0 + 2 p_1 + p_2}{4}, \frac{p_1 - p_0}{2}} \rangle                                                          &\geq 0\nonumber\\
	\LRA && \langle {-3p_0 + 2 p_1 + p_2, p_1 - p_0} \rangle                                                                              &\geq 0\nonumber\\
	\LRA && \langle  {(p_2 - p_0) + 2(p_1 - p_0), p_1 - p_0} \rangle                                                                      &\geq 0\nonumber\\
	\LRA && \underbrace{\langle  {p_2 - p_0, p_1 - p_0} \rangle}_{\geq 0\textrm{ by Eq.~(\ref{eq:cubic_bezier_p2p0p1p0})}} + 2\underbrace{\langle {p_1 - p_0, p_1 - p_0} \rangle}_{\geq 0} &\geq 0\nonumber,
\end{alignat}
\begin{alignat}{2}
       && \left\langle {p_3^L - p_1^L, p_1^L - p_0^L} \right\rangle                                                                                                                                                                       &\geq 0\nonumber\\
	\LRA && \left\langle {\frac{p_0 + 3p_1 + 3p_2 + p_3}{8} - \frac{p_0 + p_1}{2}, \frac{p_0 + p_1}{2} - p_0} \right\rangle                                                                                                                 &\geq 0\nonumber\\
	\LRA && \langle {\frac{-3p_0 - p_1 + 3p_2 + p_3}{8}, \frac{p_1 - p_0}{2}} \rangle                                                                                                                                            &\geq 0\nonumber\\
	\LRA && \langle {-3(p_2 - p_0) + (p_3 - p_1), p_1 - p_0} \rangle                                                                                                                                                             &\geq 0\nonumber\\
	\LRA && -3 \underbrace{\langle {p_2 - p_0, p_1 - p_0} \rangle}_{\geq 0\textrm{ by Eq.~(\ref{eq:cubic_bezier_p2p0p1p0})}} + \underbrace{\langle {p_3 - p_1, p_1 - p_0} \rangle}_{\geq 0\textrm{ by Eq.~(\ref{eq:cubic_bezier_p3p1p1p0})}} &\geq 0\nonumber,
\end{alignat}
\begin{alignat}{2}
       && \left\langle {p_3^L - p_1^L, p_3^L - p_2^L} \right\rangle                                                                                                                                                                                          &\geq 0\nonumber\\
	\LRA && \left\langle {\frac{p_0 + 3p_1 + 3p_2 + p_3}{8} - \frac{p_0 + p_1}{2}, \frac{p_0 + 3p_1 + 3p_2 + p_3}{8} - \frac{p_0 + 2p_1 + p2}{4}} \right\rangle                                                                                                &\geq 0\nonumber\\
	\LRA && \langle {-3p_0 - p_1 + 3p_2 + p_3, -p_0 - p_1 + p_2 + p_3} \rangle                                                                                                                                                                      &\geq 0\nonumber\\
	\LRA && \langle {3(p_2 - p_0) + (p_3 - p_1), (p_2 - p_0) + (p_3 - p_1)} \rangle                                                                                                                                                                 &\geq 0\nonumber\\
	\LRA && 3\underbrace{\langle {p_2 - p_0, p_2 - p_0} \rangle}_{\geq 0} + 4\underbrace{\langle {p_2 - p_0, p_3 - p_1} \rangle}_{\geq 0\textrm{ by Eq.~(\ref{eq:cubic_bezier_p2p0p3p1})}} + \underbrace{\langle {p_3 - p_1, p_3 - p_1} \rangle}_{\geq 0} &\geq 0,\nonumber
\end{alignat}
\begin{alignat}{2}
       && \left\langle {p_2^L - p_0^L, p_3^L - p_2^L} \right\rangle                                                                                                                                                                        &\geq 0\nonumber\\
	\LRA && \left\langle {\frac{p_0 + 2 p_1 + p_2}{4} - p_0, \frac{p_0 + 3p_1 + 3p_2 + p_3}{8} - \frac{p_0 + 2p_1 + p2}{4}} \right\rangle                                                                                                    &\geq 0\nonumber\\
	\LRA && \langle {-3p_0 + 2 p_1 + p_2, -p_0 - p_1 + p_2 + p_3} \rangle                                                                                                                                                         &\geq 0\nonumber\\
	\LRA && \langle {(p_2 - p_0) + 2(p_1 - p_0), (p_2 - p_0) + (p_3 - p_1)} \rangle                                                                                                                                               &\geq 0\nonumber\\
	\LRA && \underbrace{\langle {p_2 - p_0, p_2 - p_0} \rangle}_{\geq 0} + \underbrace{\langle {p_2 - p_0, p_3 - p_1} \rangle}_{\geq 0\textrm{ by Eq.~(\ref{eq:cubic_bezier_p2p0p3p1})}}                                                &\nonumber\\
       && + 2\underbrace{\langle {p_1 - p_0, p_2 - p_0} \rangle}_{\geq 0\textrm{ by Eq.~(\ref{eq:cubic_bezier_p2p0p1p0})}} + 2\underbrace{\langle {p_1 - p_0, p_3 - p_1} \rangle}_{\geq 0\textrm{ by Eq.~(\ref{eq:cubic_bezier_p3p1p1p0})}} &\geq 0,\nonumber
\end{alignat}
\begin{alignat}{2}
       && \left\langle {p^L_2 - p^L_0, p^L_3 - p^L_1} \right\rangle                                                                                                                                                                                & \geq 0\nonumber\\
	\LRA && \left\langle {\frac{p_0 + 2 p_1 + p_2}{4} - p_0, \frac{p_0 + 3p_1 + 3p_2 + p_3}{8} - \frac{p_0 + p_1}{2}} \right\rangle                                                                                                                  & \geq 0\nonumber\\
	\LRA && \langle {-3p_0 + 2p_1 + p_2, p_0 + 3p_1 + 3p_2 + p_3 - 4p_0 - 4p_1} \rangle                                                                                                                                                   & \geq 0\nonumber\\
	\LRA && \langle {-3p_0 + 2p_1 + p_2, -3p_0 - p_1 + 3p_2 + p_3} \rangle                                                                                                                                                                & \geq 0\nonumber\\
	\LRA && \langle {2(p_1 - p_0) + (p_2 - p_0), 3(p_2 - p_0) + (p_3 - p_1)} \rangle                                                                                                                                                      & \geq 0\nonumber\\
	\LRA && 6\underbrace{\langle {(p_1 - p_0), (p_2 - p_0) } \rangle}_{\geq 0\textrm{ by Eq.~(\ref{eq:cubic_bezier_p2p0p1p0})}} + 2\underbrace{\langle {(p_1 - p_0), (p_3 - p_1) } \rangle}_{\geq 0\textrm{ by Eq.~(\ref{eq:cubic_bezier_p3p1p1p0})}} &\nonumber\\
       && +3\underbrace{\langle {(p_2 - p_0), (p_2 - p_0) } \rangle}_{\geq 0} + \underbrace{\langle {(p_2 - p_0), (p_3 - p_1) } \rangle}_{\geq 0\textrm{ by Eq.~(\ref{eq:cubic_bezier_p2p0p3p1})}}                                            & \geq 0.\nonumber
\end{alignat}

For the right subcurve 
\begin{alignat}{2}
       &&\left\langle  {p^R_2 - p^R_0, p^R_1 - p^R_0} \right\rangle                                                                                                                                                                                            & \geq 0\nonumber\\
	\LRA && \left\langle  {\frac{p_2 + p_3}{2} - \frac{p_0 + 3 p_1 + 3 p_2 + p_3}{8}, \frac{p_1 + 2p_2 + p_3}{4} - \frac{p_0 + 3p_1 + 3p_2 + p_3}{8}} \right\rangle                                                                                              & \geq 0\nonumber\\
	\LRA && \langle  {-p_0 - 3p_1 + p_2 + 3p_3, -p_0 - p_1 + p_2 + p_3} \rangle                                                                                                                                                                       & \geq 0\nonumber\\
	\LRA && \langle  {3(p_3 - p_1)+(p_2 - p_0), (p_3 - p_1)+(p_2 - p_0)} \rangle                                                                                                                                                                      & \geq 0\nonumber\\
	\LRA && 3\underbrace{\langle  {p_3 - p_1, p_3 - p_1} \rangle},{\geq 0} + 4\underbrace{\langle {p_3 - p_1, p_2 - p_0} \rangle}_{\geq 0\textrm{ by Eq.~(\ref{eq:cubic_bezier_p2p0p3p1})}} + \underbrace{\langle  {p_2 - p_0, p_2 - p_0} \rangle}_{\geq 0} & \geq 0,\nonumber
\end{alignat}
\begin{alignat}{2}
       &&\left\langle  {p^R_3 - p^R_1, p^R_1 - p^R_0} \right\rangle                                                                                                                                                                        & \geq 0\nonumber\\
	\LRA && \left\langle  {p_3 - \frac{p_1 + 2p_2 + p_3}{4}, \frac{p_1 + 2p_2 + p_3}{4} - \frac{p_0 + 3p_1 + 3p_2 + p_3}{8}} \right\rangle                                                                                                   & \geq 0\nonumber\\
	\LRA && \langle {-p_1 - 2p_2 + 3p_3, -p_0 -p_1 + p_2 + p_3} \rangle                                                                                                                                                           & \geq 0\nonumber\\
	\LRA && \langle {(p_3 - p_1) + 2(p-3 - p_2), (p_3 - p_1) + (p_2 - p_0)} \rangle                                                                                                                                               & \geq 0\nonumber\\
	\LRA && \underbrace{\langle {p_3 - p_1, p_3 - p_1} \rangle}_{\geq 0} + \underbrace{\langle {p_3 - p_1, p_2 - p_0} \rangle}_{\geq 0\textrm{ by Eq.~(\ref{eq:cubic_bezier_p2p0p3p1})}}                                                &\nonumber\\
       && + 2\underbrace{\langle {p_3 - p_2, p_3 - p_1} \rangle}_{\geq 0\textrm{ by Eq.~(\ref{eq:cubic_bezier_p3p1p3p2})}} + 2\underbrace{\langle {p_3 - p_2, p_2 - p_0} \rangle}_{\geq 0\textrm{ by Eq.~(\ref{eq:cubic_bezier_p2p0p3p2})}} & \geq 0,\nonumber
\end{alignat}
\begin{alignat}{2}
       &&\left\langle  {p^R_3 - p^R_1, p^R_3 - p^R_2} \right\rangle                                                                                                                        & \geq 0\nonumber\\
	\LRA && \left\langle {p_3 - \frac{p_1 - 2p_2 + p_3}{4}, p_3 - \frac{p_2 + p_3}{2}} \right\rangle                                                                                         & \geq 0\nonumber\\
	\LRA && \langle {-p_1 - 2p_2 + 3p_3, -p_2 + p_3} \rangle                                                                                                                      & \geq 0\nonumber\\
	\LRA && \langle {2(p_3 - p_2) + (p_3 - p_1), p_3 - p_2} \rangle                                                                                                               & \geq 0\nonumber\\
	\LRA && 2\underbrace{\langle {p_3 - p_2, p_3 - p_2} \rangle}_{\geq 0} + \underbrace{\langle {p_3 - p_1, p_3 - p_2} \rangle}_{\geq 0\textrm{ by Eq.~(\ref{eq:cubic_bezier_p3p1p3p2})}} & \geq 0,\nonumber
\end{alignat}
\begin{alignat}{2}
       &&\left\langle  {p^R_2 - p^R_0, p^R_3 - p^R_2} \right\rangle                                                                                                                                                                     & \geq 0\nonumber\\
	\LRA && \left\langle { \frac{p_2 + p_3}{2} - \frac{p_0 + 3 p_1 + 3 p_2 + p3}{8}, p_3 - \frac{p_2 + p_3}{2}} \right\rangle                                                                                                             & \geq 0\nonumber\\
	\LRA && \langle {-p_0 - 3p_1 + p_2 + 3 p_3, p_3 - p_2} \rangle                                                                                                                                                             & \geq 0\nonumber\\
	\LRA && \langle {3(p_3 - p_1) + (p_2 - p_0), p_3 - p_2} \rangle                                                                                                                                                            & \geq 0\nonumber\\
	\LRA && 3\underbrace{\langle {p_3 - p_1, p_3 - p_2} \rangle}_{\geq 0\textrm{ by Eq.~(\ref{eq:cubic_bezier_p3p1p3p2})}} + \underbrace{\langle {p_2 - p_0, p_3 - p_2} \rangle}_{\geq 0\textrm{ by Eq.~(\ref{eq:cubic_bezier_p2p0p3p2})}} & \geq 0,\nonumber
\end{alignat}
\begin{alignat}{2}
       &&\left\langle  {p^R_2 - p^R_0, p^R_3 - p^R_1} \right\rangle                                                                                                                                                                       & \geq 0\nonumber\\
	\LRA && \left\langle { \frac{p_2 + p_3}{2} - \frac{p_0 + 3 p_1 + 3 p_2 + p3}{8}, p_3 - \frac{p_1 + 2p_2 + p_3}{4}} \right\rangle                                                                                                        & \geq 0\nonumber\\
	\LRA && \langle {-p_0 - 3p_1 + p_2 + 3 p_3, 3p_3 - p_1 - 2 p_2} \rangle                                                                                                                                                      & \geq 0\nonumber\\
	\LRA && \langle {3(p_3 - p_1)+(p_2 - p_0), 2(p_3 - p_2) + (p_3 - p_1)} \rangle                                                                                                                                               & \geq 0\nonumber\\
	\LRA && 6\underbrace{\langle {p_3 - p_1, p_3 - p_2} \rangle}_{\geq 0\textrm{ by Eq.~(\ref{eq:cubic_bezier_p3p1p3p2})}} + 3\underbrace{\langle {p_3 - p_1, p_3 - p_1} \rangle}_{\geq 0}                                             &\nonumber\\
       && + 2\underbrace{\langle {p_2 - p_0, p_3 - p_2} \rangle}_{\geq 0\textrm{ by Eq.~(\ref{eq:cubic_bezier_p2p0p3p2})}} + \underbrace{\langle {p_2 - p_0, p_3 - p_1} \rangle}_{\geq 0\textrm{ by Eq.~ (\ref{eq:cubic_bezier_p2p0p3p1})}} & \geq 0.\nonumber
\end{alignat}
\end{proof}

\section{Intersection code}

\begin{lstlisting}[caption={Intersection of a unit ray with a plane through $p$ with normal $n$.},label=lst:ray-plane]
float intersect_plane(const vec3f p, const vec3f n)
{
	return dot3(n, p) / n.z;
}
\end{lstlisting}

\if 0
\begin{lstlisting}[caption=Intersection of a unit ray with an infinite cylinder with radius $r$ through $o$ along $a$.,label=lst:ray-cylinder]
tuple<float, float> intersect_cylinder
(
	const vec3f o,
	const vec3f a,
	const float r
)
{
	const float d = a.x * o.y - a.y * o.x;
	const float g = a.x * a.x + a.y * a.y;
	const float e = r * r - d * d / g;

	if (g == 0 && o.x * o.x + o.y * o.y < r * r)
		return make_tuple(-FLT_MAX, FLT_MAX);
	if (e < 0)
		return make_tuple(FLT_MAX, FLT_MAX);

	const float t_cpa = o.z - a.z * (a.x*o.x + a.y*o.y)/g;
	const float s = sqrtf(e * (a.z * a.z + g) / g);

	return make_tuple(t_cpa - s, t_cpa + s);
}
\end{lstlisting}
\else
\begin{minipage}{\linewidth}
\begin{lstlisting}[caption=Intersection of a unit ray with an infinite cylinder with radius $r$ through $o$ along $a$.,label=lst:ray-cylinder]
tuple<float, float> intersect_cylinder
(
	const vec3f o,
	const vec3f a,
	const float r
)
{
	const float d = a.x * o.y - a.y * o.x;
	const float g = a.x * a.x + a.y * a.y;
	if (g == 0 && o.x * o.x + o.y * o.y < r * r)
		return make_tuple(-FLT_MAX, FLT_MAX);

	const float h = 1.0f / g;
	const float e = r * r - d * d * h;

	if (e < 0)
		return make_tuple(FLT_MAX, FLT_MAX);

	const float t_cpa = o.z - a.z * (a.x*o.x + a.y*o.y)*h;
	const float s = sqrtf(e * (a.z * a.z + g) * h);

	return make_tuple(t_cpa - s, t_cpa + s);
}
\end{lstlisting}
\end{minipage}
\fi

\begin{minipage}{\linewidth}
\begin{lstlisting}[caption={Maximum distance of the two points $p, q$ to the line starting in the origin and going along $d$.},label=lst:distance-point-line]
float dist_2points_line
(
	const vec3f& p,
	const vec3f& q,
	const vec3f& d
)
{
	const float c = dot3(d, d);

	const float bp = dot3(p, d) / c;
	const vec3f pPbp = p - bp * d;
	const float l2_pPbp = dot3(pPbp, pPbp);

	const float bq = dot3(q, d) / c;
	const vec3f qPbq = q - bq * d;
	const float l2_qPbq = dot3(qPbq, qPbq);

	return sqrtf(max(l2_pPbp, l2_qPbq));
}
\end{lstlisting}
\begin{lstlisting}[caption={Conversion of the integer representation of the interval (start, size) to the floating point
interval $\lbrack u_0, u_1\rbrack$.}, label={lst:calculate_interval}]
tuple<float, float> get_interval
(
 const int32_t start,
 const int32_t size
)
{
	const int32_t ui0  = 0x3f800000 | start;
	const float u0 = int_as_float(ui0) - 1.0f;
	const int32_t ui1 = min(ui0 + size, 0x40000000);
	const float u1 = int_as_float(ui1) - 1.0f;
	return make_tuple(u0, u1);
}
\end{lstlisting}
\begin{lstlisting}[caption={Evaluate cubic B\'ezier curve and its (scaled) derivative.},label=lst:eval_cubic_bezier]
vec4f eval(const BezierCurve c, const float u)
{
	float v = 1.0f - u;
	return v * v * v * c.p0
	       + 3.0f * u * v * v * c.p1
	       + 3.0f * u * u * v * c.p2
	       + u * u * u * c.p3;
}

vec4f eval_derivative(const BezierCurve c, const float u)
{
	float v = 1.0f - u;
	return v * v * (c.p1 - c.p0)
	       + 2.0f * u * v * (c.p2 - c.p1)
	       + u * u * (c.p3 - c.p2);
}
\end{lstlisting}
\end{minipage}

\begin{minipage}{\linewidth}
\begin{lstlisting}[caption={Re-calculation of the cubic B\'ezier curve
represented by the tuple $(p, d, t_0,t_1)$ for an interval $\lbrack u0, u1 \rbrack$ used after backtracking.},
label=lst:recalculation, morekeywords={BezierCurve, BezierCurveDelta, Intersection}]
BezierCurveDelta calculate_control_points
(
	const BezierCurve& c,
	const uint32_t     cur_start,
	const uint32_t     cur_size
)
{
	float u0, u1;
	tie(u0, u1) = get_interval(cur_start, cur_size);
	const vec4f p = eval(c, u0);
	const vec4f d = eval(c, u1) - p;
	const vec4f t0 = (u1 - u0) * eval_derivative(c, u0);
	const vec4f t1 = (u1 - u0) * eval_derivative(c, u1);
	return BezierCurveDelta(p, d, t0, t1);
}
\end{lstlisting}
\begin{lstlisting}[caption={Subdivide the curve}, label={lst:subdivide}, morekeywords={BezierCurve, BezierCurveDelta, Intersection}]
void subdivide_and_set
(
	const bool        go_right,
	const vec4f&      center,
	const vec4f&      t_center,
	BezierCurveDelta& c
)
{
	if (go_right)
	{
		c.p  += center;
		c.d  -= center;
		c.t0  = t_center;
		c.t1 *= 0.5f;
	}
	else
	{
		c.d   = center;
		c.t0 *= 0.5f;
		c.t1  = t_center;
	}
}
\end{lstlisting}
\begin{lstlisting}[caption={Calculate a conservative radius of a cylinder bounding a cubic B\'ezier curve.}, label={lst:calc_radius}, morekeywords={BezierCurve, BezierCurveDelta, Intersection}]
float calculate_conservative_radius
(
	const BezierCurveDelta& c
)
{
	const float dist = dist_2points_line(c.t0.xyz(), c.t1.xyz(), c.d.xyz());
	const float max_r = c.p.w + max(max(0.0f, c.t0.w), max(c.d.w, c.d.w - c.t1.w));
	return dist + max_r;
}
\end{lstlisting}
\end{minipage}

\begin{minipage}{\linewidth}
\begin{lstlisting}[caption={Partition space, subdivide curve, and update state}, label={lst:subdivide_partition_and_update}, morekeywords={BezierCurve, BezierCurveDelta, Intersection}]
tuple<bool, bool> subdivide_partition_and_update
(
	const BezierCurveDelta& c,
	const float             t0,
	const float             t1,
	float&                  t_min,
	float&                  t_max
)
{
	vec4f center;
	vec4f t_center;
	tie(center, t_center) = c.get_center_and_tangent();
	const float t_plane = intersect_plane(c.p.xyz() + center.xyz(), t_center.xyz());

	const bool go_right = t_plane > t0 ^ t_center.z > 0;
	const bool both_hit = t0 < t_plane && t1 > t_plane;

	// Update t interval
	if (t_plane > t0) t_max = min(t_max, t_plane);
	else t_min = max(t_min, t_plane);

	// Subdivide
	subdivide_and_set(go_right, center, t_center, c);

	return make_tuple(both_hit, go_right);
}
\end{lstlisting}
\begin{lstlisting}[caption={Calculate the valid $t$ interval between the cropping planes.}, label={lst:calc_t_interval}, morekeywords={BezierCurve, BezierCurveDelta, Intersection}]
tuple<float, float> calculate_t_interval
(
	const BezierCurveDelta& cur_c,
	const float             ray_t_max
)
{
	const float t0 = intersect_plane(cur_c.p, cur_c.t0);
	const float t1 = intersect_plane(cur_c.p + cur_c.d, cur_c.t1);
	float t_min = 0.0f;
	if (cur_c.t0.z > 0) t_min = max(t_min, t0);
	if (cur_c.t1.z < 0) t_min = max(t_min, t1);
	float t_max = ray_t_max;
	if (cur_c.t0.z < 0) t_max = min(t_max, t0);
	if (cur_c.t1.z > 0) t_max = min(t_max, t1);

	return make_tuple(t_min, t_max);
}
\end{lstlisting}
\end{minipage}

\begin{minipage}{\linewidth}
\begin{lstlisting}[caption={Transform a cubic B\'ezier curve to a ray frame.}, label={lst:transform_curve}, morekeywords={BezierCurve, BezierCurveDelta, Intersection}]
vec3f transform
(
	const vec3f& p,
	const vec3f& origin,
	const vec3f& u,
	const vec3f& v,
	const vec3f& w
)
{
	vec3f q = p - origin;
	return vec3f(dot3(q, u), dot3(q, v), dot3(q, w));
}

BezierCurve transform_to_ray_frame
(
	const vec3f&       origin,
	const vec3f&       direction,
	const BezierCurve& c
)
{
	const vec3f w = direction;
	vec3f u, v;
	tie(u, v) = make_ONB(w);

	return BezierCurve(transform(c.p0, origin, u, v, w),
	                   transform(c.p1, origin, u, v, w),
	                   transform(c.p2, origin, u, v, w),
	                   transform(c.p3, origin, u, v, w));
}
\end{lstlisting}
\begin{lstlisting}[caption={Bit string manipulation to advance to the next finer level (go\_down) and for backtracking (jump\_up).}, label={lst:bitstring_manipulation}]
void go_down
(
	const bool both_hit,
	const bool go_right,
	uint32_t&  cur_size,
	uint32_t&  bit_string,
	uint32_t&  cur_start
)
{
	cur_size /= 2;
	if (both_hit) bit_string ^= cur_size;
	if (go_right) cur_start  ^= cur_size;
}

void jump_up
(
	uint32_t& cur_size,
	uint32_t& cur_start,
	uint32_t& bit_string
)
{
	// ctz = count trailing zeros
	cur_size    = 1 << ctz(bit_string);
	cur_start  ^= cur_size;
	bit_string ^= cur_size;
	cur_start  &= ~(cur_size - 1);
}
\end{lstlisting}
\end{minipage}

\begin{minipage}{\linewidth}
\begin{lstlisting}[caption={Calculate the normal and $u$ parameter in the intersection.}, label={lst:calc_intersection}, morekeywords={BezierCurve, BezierCurveDelta, Intersection}]
Intersection calculate_intersection
(
	const uint32_t          cur_start,
	const uint32_t          cur_size,
	const float             t,
	const BezierCurve&      c,
	const BezierCurveDelta& cur_c
)
{
		// Project intersection - p_0 onto p_n - p_0
		vec3f hb = cur_c.p.xyz() - vec3f(0, 0, t);
		float u_local = dot3(hb, cur_c.d.xyz());
		u_local /= dot3(cur_c.d.xyz(), cur_c.d.xyz());
		u_local = max(0.0f, min(1.0f, -u_local));

		float u0, u1;
		tie(u0, u1) = get_interval(cur_start, cur_size);
		float u = u0 + u_local * (u1 - u0);

		// End caps?
		if (u == 0.0f || u == 1.0f)
		{
			vec3f n = (u == 0.0f ? c.p0.xyz() : c.p3.xyz()) -
			          (u == 0.0f ? c.p1.xyz() : c.p2.xyz());
			return Intersection(t, u, normalize3(n));
		}

		vec3f ap = cur_c.p.xyz() + u_local * cur_c.axis;

		// Recompute frame
		vec3f frame_u, frame_v;
		tie(frame_u, frame_v) = make_ONB(ray_direction);

		// Caclulate normal in global frame
		vec3f normal = normalize3(t * ray_direction - ap.x * frame_u - ap.y * frame_v - ap.z * ray_direction);

		return Intersection(t, u, normal);
}
\end{lstlisting}
\end{minipage}

\begin{minipage}{\linewidth}
\begin{lstlisting}[caption={Iterative routine for ray/fiber intersection using recursive subdivision with disjoint
bounding volumes.}, label={lst:algorithm}, morekeywords={BezierCurve, BezierCurveDelta, Intersection, Ray}]
Intersection intersect
(
	const Ray&         ray,
	const BezierCurve& c,
	const uint32_t     min_size,
)
{
	// State
	const BezierCurve c_local = transform_to_ray_frame(ray.origin, ray.direction, c);
	BezierCurveDelta cur_c    = c_local; // conversion {p0, p1, p2, p3} -> {p, d, t0, t1}
	float t                   = FLT_MAX;
	uint32_t bit_string       = 0;
	uint32_t cur_size         = 1 << 23;
	uint32_t cur_start        = 0;
	float t_min, t_max;

	// Set the initial t interval
	tie(t_min, t_max) = calculate_t_interval(cur_c, ray.t_max);

	while (true)
	{
		const float cur_r = calculate_conservative_radius(cur_c);

		float t0, t1;
		tie(t0, t1) = intersect_cylinder(cur_c.p, cur_c.d, cur_r);
		if (t1 >= t_min && t0 <= t_max)
		{
			if (cur_size <= min_size)
			{
				t = t0;
				break;
			}

			// Subdivide and go down in the hierarchy
			bool both_hit, go_right;
			tie(both_hit, go_right) = subdivide_partition_and_update(cur_c, t0, t1, t_min, t_max);
			go_down(both_hit, go_right, cur_size, bit_string, cur_start);
		}
		else
		{
			// Done?
			if (!bit_string) break;

			// Backtracking
			jump_up(cur_size, cur_start, bit_string);
			cur_c = calculate_control_points(c_local, cur_start, cur_size);

			// Re-calculate the t interval
			tie(t_min, t_max) = calculate_t_interval(cur_c);
		}
	}

	// Calculate normal etc. if a closer intersection has been found
	if (t < ray.t_max)
		return calculate_intersection(cur_start, cur_size, t, c, cur_c);
	else
		return Intersection(); // no intersection
}
\end{lstlisting}
\end{minipage}

\end{document}